\xpatchcmd{\algorithmic}{\itemsep\z@}{\itemsep=-1pt}{}{}
\newcommand{\mynote}[3]{
    \fbox{\bfseries\sffamily\scriptsize#1}
    {\small$\blacktriangleright$\textsf{\emph{\color{#3}{#2}}}$\;\blacktriangleleft$}}}
\newcommand{\mynote}[3]{}}
\newcommand{\CX}{\textsf{CX}\xspace}
\newcommand{\CXT}{\textsf{CXTimed}\xspace}
\begin{document}

\setcopyright{none}

\fancypagestyle{firstpagestyle}{\fancyhead{}\fancyfoot{}}
\fancyhead{}\fancyfoot{}

\sloppy

\newtheoremstyle{compactbf}%
{2pt}
{1pt}
{\itshape}
{}
{\textcolor{darkgray}{$\blacktriangleright$}\nobreakspace\sffamily\bfseries}
{.}
{.5em}
{}

\theoremstyle{compactbf}
\newtheorem{thm}{Theorem}
\newtheorem{defi}{Definition}
\newtheorem{prop}{Proposition}
\newtheorem{lem}{Lemma}
\newtheorem{cor}{Corollary}

\newcommand{\thmautorefname}{Theorem}
\newcommand{\defiautorefname}{Definition}
\newcommand{\propautorefname}{Proposition}
\newcommand{\lemautorefname}{Lemma}
\newcommand{\corautorefname}{Corollary}
\newcommand{\algorithmautorefname}{Algorithm}

\algrenewcommand\alglinenumber[1]{\tiny #1:}
\algrenewcommand\algorithmicindent{1.0em}

\lstset{basicstyle=\linespread{0.6}\footnotesize\sffamily,columns=fullflexible,language=C++,numbers=left,stepnumber=1,numberstyle=\tiny,numbersep=3pt,numberblanklines=false,escapeinside={(*@}{@*)}}
\def\ContinueLineNumber{\lstset{firstnumber=last}}
\def\StartLineAt#1{\lstset{firstnumber=#1}}
\let\numberLineAt\StartLineAt


\setlength{\textfloatsep}{2pt}
\setlength{\intextsep}{2pt}


\title{A Wait-Free Universal Construct for Large Objects}

\author{Andreia Correia}
\affiliation{\institution{University of Neuchatel}}
\email{andreia.veiga@unine.ch}

\author{Pedro Ramalhete}
\affiliation{\institution{Cisco Systems}}
\email{pramalhe@gmail.com}

\author{Pascal Felber}
\affiliation{\institution{University of Neuchatel}}
\email{pascal.felber@unine.ch}

\renewcommand{\shorttitle}{CX Universal Construct}


\begin{abstract}
Concurrency has been a subject of study for more than 50 years. 
Still, many developers struggle to adapt their sequential code to be accessed concurrently.
This need has pushed for generic solutions and specific concurrent data structures.

Wait-free universal constructs are attractive as they can turn a sequential implementation of any object into an equivalent, yet concurrent and wait-free, implementation. 
While highly relevant from a research perspective, these techniques are of limited practical use when the underlying object or data structure is sizable. 
The copy operation can consume much of the CPU's resources and significantly degrade performance.

To overcome this limitation, we have designed \CX, a multi-instance-based wait-free universal construct that substantially reduces the amount of copy operations.
The construct maintains a bounded number of instances of the object that can potentially be brought up to date.
We applied \CX to several sequential implementations of data structures, including STL implementations, and compared them with existing wait-free constructs.
Our evaluation shows that \CX performs significantly better in most experiments, and can even rival with hand-written lock-free and wait-free data structures, simultaneously providing wait-free progress, safe memory reclamation and high reader scalability. 


\end{abstract}

%
%
\begin{CCSXML}
<ccs2012>
<concept>
<concept_id>10010520.10010575.10010577</concept_id>
<concept_desc>Computer systems organization~Reliability</concept_desc>
<concept_significance>300</concept_significance>
</concept>
<concept>
<concept_id>10003752.10003809.10011778</concept_id>
<concept_desc>Theory of computation~Concurrent algorithms</concept_desc>
<concept_significance>300</concept_significance>
</concept>
</ccs2012>
\end{CCSXML}

\ccsdesc[300]{Theory of computation~Concurrent algorithms}
\ccsdesc[300]{Computer systems organization~Reliability}


\maketitle


\section{Introduction}
\label{sec:introduction}

Many synchronization primitives have been proposed in the literature for providing concurrent access to shared data, with the two most common being mutual exclusion locks and reader-writer locks.
Both of these primitives provide blocking progress, with some mutual exclusion algorithms like the ticket lock~\cite{mellor1991algorithms} or CLH lock~\cite{magnusson1994queue}
going so far as being starvation free. 
Even today, the usage of locks is still of great relevance because of their generality and ease of use, despite being prone to various issues such as \emph{priority inversion}, \emph{convoying} or \emph{deadlock}~\cite{herlihy1993transactional}. 
Yet, their main drawback comes from their lack of scalability and suboptimal use of the processing capacity of multi-core systems, except for lock-based techniques with disjoint access.
This has led researchers to extensively explore alternatives to support non-blocking data structures, either using ad-hoc algorithms or generic approaches.

Generic constructs are attractive from a theoretical perspective, but so far they have been largely neglected by practitioners because of their lack of efficiency when compared to dedicated algorithms tailored for a specific data structure.
The search for a generic non-blocking solution that is also practical has resulted in significant developments over the last decades, notably in the fields of wait-free universal constructs (UCs) and hardware and software transactional memory (HTM and STM).

A wait-free universal construct is a generic mechanism meant to provide concurrent wait-free access to a sequential implementation of an object or group of objects, \eg, a data structure. 
In other words, it takes a \emph{sequential specification} of an object and provides a concurrent implementation with wait-free progress~\cite{Herlihy91}. 
It supports an operation, called \texttt{applyOp()}, which takes as a parameter the sequential implementation of any operation on the object, and simulates its execution in a concurrent environment.
Most UCs can be adapted to provide an API that distinguishes between read-only operations and mutative operations on the object, which henceforth will be referred to as \texttt{applyRead()} and \texttt{applyUpdate()} respectively.

Software transactional memory, on the other hand, has transactional semantics, allowing the user to make an operation or group of operations seem \emph{atomic} and providing \emph{serializability} between transactions~\cite{herlihy1993transactional}. 
STMs and UCs present two separate approaches to developers when it comes to dealing with concurrent code.
Both approaches allow the end user to reason about the code as if it were sequential.
STMs \emph{instrument} the loads and stores on the sequential implementation.
STMs may also require type annotation, function annotation or replacement of allocation/deallocation calls with equivalent methods provided by the STM.
Finally, to the best of our knowledge, there is currently no STM with wait-free progress.
A recent development, named RomulusLR~\cite{correia2018romulus}, provides wait-free progress for read-only operations and blocking starvation-free updates.

We observe that the UC literature can be classified into two groups of algorithms, UCs that do not require instrumentation, and UCs that do.
\emph{Non-instrumenting} UCs require no annotation of the sequential implementation, allowing the developer to \emph{wrap} the underlying object and creating an equivalent object with concurrent access for all of its methods.
\emph{Instrumenting} UCs require the developer to annotate and modify the sequential implementation, similar to what must be done for an STM.
This annotation implies effort from the developer and is prone to errors.
In addition, the fact that annotation is required at all, makes it difficult or unfeasible to use legacy code or data structures provided by pre-compiled libraries (\eg, \texttt{std::set} and \texttt{std::map}) because it would require modifying the library's source code.

In this paper, we focus on non-instrumenting UCs with the goal of addressing their main limitations in terms of performance and usability, which made them so far impractical for real-world applications.
We introduce \CX, a non-instrumenting UC with linearizable operations and wait-free progress that does not require any annotation of the underlying sequential implementation.
\CX provides fast and scalable read-only operations by exploiting their \emph{disjoint access parallel}~\cite{israeli1994disjoint} nature.

In short, with \CX we make the following contributions:
\begin{enumerate*}[label=\emph{(\roman*)}]
\item We introduce the first practical wait-free UC, written in portable C++, with integrated wait-free memory reclamation and high scalability for read-mostly workloads.
\item We address wait-free memory reclamation with a flexible scheme that combines reference counting with hazard pointers
\item We present the first portable implementation of the \textsf{PSim} UC~\cite{fatourou2011highly}, with integrated wait-free memory reclamation, and added high scalability for read-only operations.
\end{enumerate*}

The rest of the paper is organized as follows.
We first discuss related work in \S\ref{sec:related}.
We then present the \CX algorithm in \S\ref{sec:algorithm}.
We perform an in-depth evaluation of \CX in \S\ref{sec:evaluation} and finally conclude in \S\ref{sec:conclusion}.
Proofs of correctness are presented in Appendix.

\section{Related Work}
\label{sec:related}

In 1983, \citet{peterson1983concurrent} was the first to attack the problem of non-blocking access to shared data and to provide several solutions to what he called the \emph{concurrent reading while writing} problem.
One of these solutions uses two instances of the same data and guarantees wait-free progress for both reads and writes, allowing multiple readers and a single writer to access simultaneously any of the two instances.
However, this approach is based on \emph{optimistic concurrency}, causing read-write races, which has troublesome implications in terms of atomicity, memory reclamation and invariance conservation.

Later, in 1990, \citet{maurice1990methodology} proposed the first wait-free UC for any number of threads. 
This UC requires no annotation or modification to the sequential implementations and is therefore a non-instrumenting UC.
His approach keeps a list of all operations ever applied, and for every new operation it will re-apply all previous operations starting from an instance in its initial state.
One by one, as each operation is appended, the list of operations grows unbounded until it exhausts all available memory, thus making this UC unsuitable for practical usage.

Since then, several wait-free UCs have been proposed~\cite{anderson1995universal,anderson1995universalmulti,chuong2010universal,ellen2016universal,fatourou2011highly}. 
Researchers have attempted to address the problem of applying wait-free UCs to large objects~\cite{afek1995wait,anderson1995universal,maurice1990methodology} though none has succeeded in providing a generic solution~\cite{raynal2017distributed}.

\citet{anderson1995universal} have proposed a technique designed to work well with large objects.
This algorithm is an instrumenting UC because it requires the end user to write a sequential procedure that treats the object as if it was stored in a contiguous array, implying adaptation or annotation of the sequential implementation.
This technique is also not universally applicable to all data structures. 

\citet{chuong2010universal} have shown a technique that makes a copy of each shared variable in the sequential implementation and executes the operations on the copy.
Although this technique can operate at the level of memory words, it would be vulnerable to race conditions from different (consecutive) operations that modify the same variables.
Even if a CAS would be used to modify these variables, ABA issues could still occur.
This word-based approach requires instrumentation of the user code.

\citet{fatourou2011highly} have designed and implemented \textsf{P-Sim}, a highly efficient wait-free and non-instrumenting UC based on fetch-and-add and LL/SC.
\textsf{P-Sim} relies on an up-to-date instance that all threads copy from, using Herlihy's combining consensus~\cite{herlihy1993methodology} to establish which operations are to be applied on the copy, independently of whether these are read or write operations.
In the best-case scenario, one copy of the entire object state is made per $N$ concurrent operations, where $N$ is the number of threads.
In the worst-case, two copies are done per operation.
Unfortunately, \textsf{P-Sim} is impractical for large objects.

More recently, \citet{ellen2016universal} have shown a wait-free UC based on LL/SC. 
Their technique provides disjoint access parallel operations with the requirement that all data items in the sequential code are only accessed via the instructions \texttt{CreateDI}, \texttt{ReadDI} and \texttt{WriteDI}, implying the need for instrumentation of the sequential implementations similar to an STM.
No implementation has been made publicly available.


\section{CX Algorithm}
\label{sec:algorithm}

\newcommand{\COMB}[1]{\ensuremath{\mathit{Comb}_{#1}}\xspace}
\newcommand{\CURCOMB}{\ensuremath{\mathit{curComb}}\xspace}
\newcommand{\HEAD}[1]{\ensuremath{\mathit{head}_{#1}}\xspace}


The \CX wait-free construct uses a wait-free queue where mutations to the object instance are placed, much like Herlihy's wait-free construct, though instead of each thread having its own copy of the instance, there are a limited number of copies that all threads can access.
The access to each of these copies is protected by a reader-writer lock, which can be acquired by multiple reader threads in \emph{shared mode}, whereas only one writer thread can get the lock in \emph{exclusive mode}.
The reader-writer lock used in \CX must guarantee that, when multiple threads compete for the lock using the \texttt{trylock()} method, at least one will succeed and obtain the lock.
This property, named \emph{strong trylock}~\cite{correia2018strong}, combines \emph{deadlock freedom} with linearizable consistency and wait-free progress. 

The \textsf{CX} construct (see \autoref{fig:examples}) is composed of:
\begin{enumerate*}[label=\emph{(\roman*)}]
\item \texttt{curComb}: a pointer to the current \texttt{Combined} instance;
\item \texttt{tail}: a pointer to the last node of the queue; and
\item \texttt{combs}: an array of \texttt{Combined} instances.
\end{enumerate*}

In turn, a \texttt{Combined} instance consists of:
\begin{enumerate*}[label=\emph{(\roman*)}]
\item \texttt{head}: a pointer to a \texttt{Node} on the queue of mutations;
\item \texttt{obj}: a copy of the data structure or object that is up to date until \texttt{head}, \ie, any mutative operation that was enqueued after \texttt{head} has not yet been applied to \texttt{obj}; and
\item \texttt{rwlock}: an instance of a reader-writer lock that protects the content of the \texttt{Combined} instance.
\end{enumerate*}

Finally, a \texttt{Node} holds:
\begin{enumerate*}[label=\emph{(\roman*)}]
\item \texttt{mutation}: a function to be applied on the object;
\item \texttt{result}: the value returned by the update function, if any;
\item \texttt{next}: a pointer to the next \texttt{Node} in the mutation queue;
\item \texttt{ticket}: a sequence number to simplify the validation in case of multiple threads applying the same mutations;
\item \texttt{refcnt}: a reference counter for memory reclamation, as well as some other fields for internal use.
\end{enumerate*}
The definitions for the main data structures of \CX are shown in Algorithm~\ref{alg:class}.

\begin{algorithm}[ht]
\vspace{-10pt}
\caption{\CX data structures}\label{alg:class}
\begin{lstlisting}[multicols=2]
template<typename C, typename R = uint64_t> 
class CX { (*@\hfill@*) // C = "Object type" ; R = "Result type"
  const int maxThr;
  std::atomic<Combined*> curComb {nullptr};
  std::function<R(C*)> mut0 = [](C* c) {return R{};};
  Node* sentinel = new Node(mut0, 0);
  std::atomic<Node*> tail {sentinel};
  Combined* combs;
  struct Node {
    std::function<R(C*)>       mutation;    (*@\hfill@*) // Not a pointer (*@\hspace{3pt}\columnbreak@*)
    std::atomic<R>             result;      (*@\hfill@*) // Not a pointer
    std::atomic<Node*>         next {nullptr};
    std::atomic<uint64_t>      ticket {0};
    const int                  enqTid;     (*@\hfill@*) // Used internally by queue
  };
  struct Combined {
    Node*  head {nullptr};
    C*  obj {nullptr};
    StrongTryRWRI rwLock {maxThr};
  };};
\end{lstlisting}
\vspace{-8pt}
\end{algorithm}

\autoref{fig:examples} illustrates the data structures and principle of \CX on a concurrent stack.
Mutative operations in the wait-free queue are represented by rounded rectangles, with node \textsf{A} corresponding to operation \texttt{push(a)}.
The stack stores its elements in a linked list of nodes (circles), with dashed lines indicating the nodes that are not yet added to the specific instance of the data structure.

The \CX construct relies on the copies of the object present in the \texttt{combs[]} array.
Initially, one of the \texttt{Combined} instance in the array holds an initialized object \texttt{obj} and its \texttt{head} pointer refers to the sentinel node $\bot$ of \CX's wait-free queue of mutations.
The other instances have both \texttt{obj} and \texttt{head} set to null.

To improve \emph{readers} performance, \CX has distinct code paths for \emph{readers} and \emph{updaters}. 
Readers call \texttt{applyRead()}, which tries to acquire the shared lock on the reader-writer lock instance of the current \texttt{Combined} instance, \texttt{curComb}.
Updaters call \texttt{applyUpdate()}, which scans the \texttt{combs[]} array and attempts to acquire the exclusive lock on the reader-writer lock of one of the \texttt{Combined} instances (which is guaranteed to succeed after a maximum of \texttt{numReaders}+2$\times$\texttt{numUpdaters} trials).

An updater thread has to account for a possible read operation that will be executed in the most up-to-date copy, which is referenced by \texttt{curComb}. 
The updater is responsible for leaving \texttt{curComb} referencing a copy that contains its mutative operation, and this copy is left with a shared lock held so as to protect it from being acquired in exclusive mode by updater threads, including itself.
When a copy of the object is required, the copy procedure will try to acquire \texttt{curComb} in shared mode and copy its \texttt{obj} while holding the shared lock, therefore guaranteeing a consistent replica. 
This implies that two \texttt{Combined} instances may be required for each updater thread: the original copy and the new replica.
If we consider that the construct will be accessed by at most \texttt{numReaders} dedicated readers and \texttt{numUpdaters} dedicated updaters, then the maximum number of \texttt{Combined} instances in use at any given time will be one per reader plus two per updater, \ie, \texttt{numReaders}+2$\times$\texttt{numUpdaters} (2$\times$\texttt{maxThreads} if every thread can potentially update the data structure).

Once an updater thread secures a \texttt{Combined} instance with exclusive access and ensures it has a copy of the object, the updater is responsible for applying all the mutations present in the mutation queue following the sequential order from the \texttt{head} of the \texttt{Combined} instance until its own mutative operation, which was previously added to the queue of mutations. 
Each node has a \texttt{ticket} that simplifies the validation in case the mutation has already been applied and \texttt{head} is more recent than the node $N$ containing the mutative operation to be made \emph{visible}.
The concept of visibility refers to a state of the object, where effects of operations on the object are available to all threads.
After the \texttt{Combined} instance is brought up-to-date with a copy of the object containing the updater's mutation, the updater thread has to make its mutation visible to other threads by ensuring that \texttt{curComb} advances to a \texttt{Combined} instance whose \texttt{head} has a \texttt{ticket} greater than or equal to $N$'s \texttt{ticket}.

We define a valid copy of the object as an instance that can be brought up to date, applying all the mutations starting from the \texttt{head} of the \texttt{Combined} where the copy is stored, until $N$'s mutation. 
An invalidation of a copy occurs when there is memory reclamation of the queue's nodes. In case a copy is invalidated, a new copy can be created from \texttt{curComb}.

\begin{figure*}
  \centering
  \includegraphics[scale=0.65]{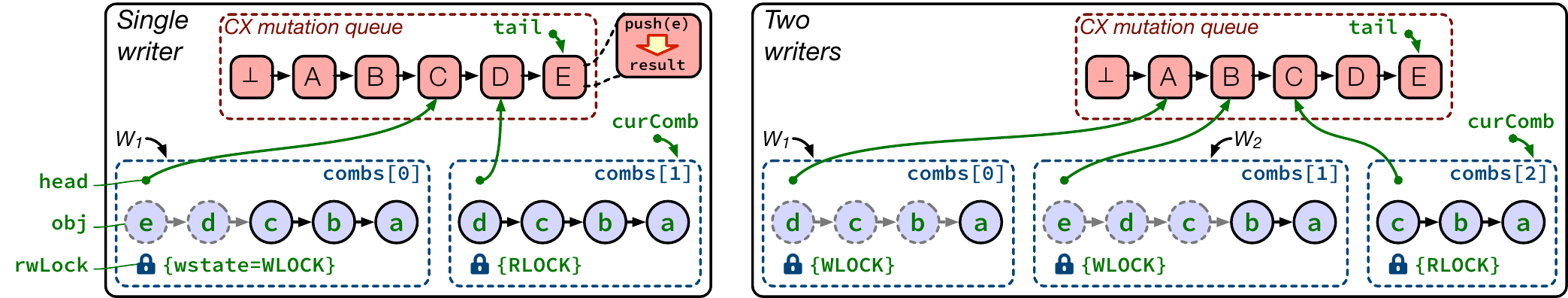}
  \caption{
    Illustration of \CX's principle on two scenarios with one (left) and two (right) writers pushing elements \texttt{a} through \texttt{e} in a shared stack.
  }
  \label{fig:examples}
\end{figure*}

Consider \autoref{fig:examples} to better understand how mutations are propagated to the available copies.
In the left figure, a single writer $W_1$ has been pushing values \texttt{a}, \texttt{b}, \texttt{c} and \texttt{d}, and is in the process of executing \texttt{push(e)}.
The mutative operation has already been inserted at the tail of \CX's wait-free queue but not yet applied to the stack.
At that point \texttt{curComb} points to \texttt{combs[1]}, which holds an up-to-date stack (with all 4 elements inserted) protected by a shared lock.
Hence the writer cannot use this instance and instead acquires \texttt{combs[0]} in exclusive mode.
The next steps for the writer will be to apply the operations starting from the \texttt{head}, \ie, push \texttt{d} and \texttt{e}, to bring the data structure up to date, update \texttt{head} to point to the last applied mutation (node \textsf{E}), atomically set \texttt{curComb} to point to \texttt{combs[0]}, and finally downgrade the lock to read mode.
The figure on the right presents a similar scenario but with two concurrent writers $W_1$ and  $W_2$, with the first one executing \texttt{push(d)} on \texttt{combs[0]} and the second one \texttt{push(e)} on \texttt{combs[1]}.
The order of operations is determined by their position in the wait-free queue, i.e., \texttt{d} is inserted before \texttt{e} and both writers will apply the operations on the \texttt{Combined} instances in this order.


\subsection{Algorithm Walkthrough}

\begin{algorithm}[ht!]
	\caption{CX ApplyRead and ApplyUpdate pseudo-code}\label{alg:cxread}\label{alg:cxmut}\vspace{-15pt}
	\scriptsize\linespread{0.6}
    \begin{multicols}{2}
  	\begin{algorithmic}[1]
		\Function {applyRead}{readFunc,tid}:
		\For {\_i $\gets$ 0, MAX\_READ\_TRIES + MAX\_THREADS}
		\If{\_i $=$ MAX\_READ\_TRIES}
		\State \_myNode $\gets$ enqueue(updFunc) \label{alg:cxread:enqueue-node}
		\EndIf
		\State \_comb $\gets$ curComb \label{alg:cxread:read:curcomb-load}
		\If{\_comb.rwlock.sharedTryLock(tid)} \label{alg:cxread:read:shared-lock}
		\If {\_comb == curComb}
		\State \_ret $\gets$ readFunc() \hfill $\triangleright$ \emph{Function call}
		\State \_comb.rwlock.sharedUnlock(tid)
		\State \textbf{return} \_ret
		\EndIf
		\State \_comb.rwlock.sharedUnlock(tid)
		\EndIf
		\EndFor \hfill $\triangleright$ \emph{A writer must have completed its operation$\ldots$}
		\State \textbf{return} \_myNode.result \hfill $\triangleright$ \emph{$\ldots$and the result is in} myNode
		\EndFunction
	\vspace{5pt}
		\Function {applyUpdate}{updFunc}:
		\State \_myNode $\gets$ enqueue(updFunc) \label{alg:cxmut:create-node} \label{alg:cxmut:enqueue-node} \label{alg:cxmut:set-ticket} \hfill \{\,1\,\}
		\State \_tkt $\gets$ myNode.ticket
		\State \_c, \_idx $\gets$ exclusiveTryLock() \label{alg:cxmut:get-instance:start} \label{alg:cxmut:lock-exclusive} \label{alg:cxmut:get-instance:end} \hfill \{\,2\,\}
		\State \_mn $\gets$ \_c.head \label{alg:cxmut:apply:first-node}
		\If {\_mn $\neq$ null $\wedge$ \_mn.ticket $\geq$ \_tkt} \label{alg:cxmut:compare-ticket-head}
		\State \_c.rwLock.exclusiveUnlock()
		\State \textbf{return} \_myNode.result \label{alg:cxmut:return-result-1}
		\EndIf
		\State \_comb $\gets$ \textbf{null}
		\State \_combIdx $\gets$ -1
		\While {$\_mn \neq \_myNode$} \label{alg:cxmut:apply:start} \label{alg:cxmut:apply:last-node}
		\If {\_mn $=$ \textbf{null} $\vee$ \_mn $=$ \_mn.next} \label{alg:cxmut:copy-required} \label{alg:cxmut:copy:start} \hfill \{\,3\,\}
		\State \_combIdx $\gets$ getCombined(\_tkt) \label{alg:cxmut:acquire-shared}
    \columnbreak
		\If {\_comb $\neq$ \textbf{null} $\vee$ \_combIdx $=$ -1} 
		\State \_c.head $\gets$ \_mn
		\State \_c.rwLock.exclusiveUnlock()
		\State \textbf{return} \_myNode.result \label{alg:cxmut:return-result-2}
		\EndIf
		\State \_comb $\gets$ combs[\_combIdx]
		\State \_mn $\gets$ \_comb.head
		\State \_c.updateHeadObj(\_comb, \_mn) \label{alg:cxmut:update-head-1} \label{alg:cxmut:delete-newcomb} \label{alg:cxmut:copy-obj}
    	\State \_comb.rwLock.sharedUnlock()
		\State \textbf{continue} \label{alg:cxmut:copy-completed}
		\EndIf \label{alg:cxmut:copy:end}
		\State \_mn $\gets$ \_mn.next
		\State \_mn.result.store(\_mn.updFunc(\_comb.obj)) \label{alg:cxmut:store-result} \hfill \{\,4\,\}
		\EndWhile \label{alg:cxmut:apply:end}
		\State \_c.head $\gets$ \_mn \label{alg:cxmut:update-head-2} \hfill \{\,4\,\}
		\State \_c.rwLock.downgradeToHandover() \label{alg:cxmut:downgrade-to-shared-lock} \hfill \{\,5\,\}
		\For {\_i $\gets$ 0, MAX\_THREADS}
		\State \_combIdx $\gets$ curComb
		\State \_comb $\gets$ combs[\_combIdx]
		\If{$\neg$sharedTryLockCheckTkt(\_comb)} \label{alg:cxmut:lock-shared} \label{alg:cxmut:compare-ticket}
		\State \textbf{continue}
		\EndIf
		\If {curComb.cas(\_combIdx, \_idx)} \label{alg:cxmut:cas} \hfill \{\,6\,\}
		\State \_comb.rwLock.handoverUnlock()
		\State \_comb.rwLock.sharedUnlock()
		\State \textbf{return} \_myNode.result \label{alg:cxmut:return-result-3}
		\EndIf
		\State \_comb.rwLock.sharedUnlock()
		\EndFor
		\State \_c.rwLock.handoverUnlock() \label{alg:cxmut:final-unlock}
		\State \textbf{return} \_myNode.result \label{alg:cxmut:return-result-4}
		\EndFunction
	\end{algorithmic}
	\end{multicols}\vspace{-10pt}
\end{algorithm}


The core of the \CX algorithm resides in the \texttt{applyUpdate()} mutative operation, shown in \autoref{alg:cxmut}. The main steps of the algorithms are:
\begin{enumerate}[leftmargin=*, align=left]
\setlength\itemsep{0em}
\item Create a new \texttt{Node} \texttt{\_myNode} with the desired mutation and insert it in the queue (line~\ref{alg:cxmut:create-node}).
\item Acquire an exclusive lock on one of the \texttt{Combined} instances in the \texttt{combs[]} array (line~\ref{alg:cxmut:get-instance:start}).
\item Verify if there is a valid copy of the data structure in \texttt{\_c} and make a copy if necessary (line~\ref{alg:cxmut:copy:start}).
\item Apply all mutations starting at \texttt{head} of the \texttt{Combined} instance until reaching the \texttt{Node} inserted in the first step (lines~\ref{alg:cxmut:apply:start} to~\ref{alg:cxmut:apply:end}), and update \texttt{head} to point to this node (line~\ref{alg:cxmut:update-head-2}).
\item Downgrade lock on \texttt{\_c} (line~\ref{alg:cxmut:downgrade-to-shared-lock}).
\item Compare-and-set (CAS) \texttt{curComb} from its current value to the just updated \texttt{Combined} instance (line~\ref{alg:cxmut:cas}). 
Upon failure, retry CAS until successful or until \texttt{head} of the current \texttt{curComb} instance is \emph{after} \texttt{\_myNode}.
\end{enumerate}

When applying a mutation to the underlying object, the first step is to create a new node with the mutation (line~\ref{alg:cxmut:create-node} of \autoref{alg:cxmut}) and insert it in \CX's queue.
Each node contains a \texttt{mutation} field that stores the mutation. 
A monotonically increasing \texttt{ticket} is assigned to the node to uniquely identify the mutation (line~\ref{alg:cxmut:set-ticket}).

The next step consists in finding an available \texttt{Combined} instance on which to apply the new mutation.
To that end, the thread must acquire a \texttt{Combined}'s lock in exclusive mode (line~\ref{alg:cxmut:lock-exclusive}). 
The \textsf{StrongTryRWRI}~\cite{correia2018strong} reader-writer lock provides a \emph{strong} \texttt{exclusiveTryLock()} method, guaranteeing that the lock will be acquired in at most 2$\times$\texttt{maxThreads} attempts.

If the locked \texttt{Combined} instance has an invalidated or null \texttt{obj} (line~\ref{alg:cxmut:copy:start}), we need to make a copy of the current object. 
To do so, we first acquire the shared lock on \texttt{curComb} (line~\ref{alg:cxmut:acquire-shared}), before updating \texttt{head} and copying \texttt{obj} (line~\ref{alg:cxmut:copy-obj}).
It is worth mentioning that copy-on-write (COW) based techniques usually make one such copy for every mutative operation, while \CX does this \emph{once} for every new used \texttt{Combined} (of which there are 2$\times$\texttt{maxThreads}) plus the number of times a copy is invalidated.

Next, we apply the mutations on \texttt{obj}, starting from the corresponding \texttt{head} node (line~\ref{alg:cxmut:apply:first-node}) until our newly added node is found (line~\ref{alg:cxmut:apply:last-node}), always saving the result of each mutation in the corresponding \texttt{node.result} (line~\ref{alg:cxmut:store-result}). 
The rationale for saving the result is that, if another thread calling \texttt{applyUpdate()} sees that its own mutation is already visible at \texttt{curComb}, it can directly return the result of the mutation (lines~\ref{alg:cxmut:return-result-1}, \ref{alg:cxmut:return-result-2}, \ref{alg:cxmut:return-result-3} and~\ref{alg:cxmut:return-result-4}) without having to actually execute the previous mutations in the queue.
This approach implies that multiple threads may be write-racing the same value into \texttt{node.result}, which means that it must be accessed atomically, further implying that \texttt{R} must fit in a \texttt{std::atomic} data type to ensure wait freedom.

After the mutations have been applied, we downgrade the lock on \texttt{\_c} and advance \texttt{curComb} with a CAS (line~\ref{alg:cxmut:cas}) so as to make the current and previous mutations visible to other threads: \texttt{curComb} will now reference a \texttt{Combined} instance that contains the effects until \texttt{head.ticket}, also \texttt{curComb} will be protected by a shared lock to guarantee the instance is always available to execute a read operation.
In addition, \texttt{curComb} always transitions to a \texttt{Combined} instance with a \texttt{head.ticket} higher than the previous one thanks to the test on line~\ref{alg:cxmut:compare-ticket}.
This guarantees that operations whose effects are visible on \texttt{curComb} will remain visible.
Finally, we can unlock the unneeded \texttt{Combined} instance to make it available for exclusive locking by other threads searching for their \texttt{\_c}, before returning the result of our mutation.


As mentioned previously, a read operation will attempt to acquire a shared lock in the most up to date instance \texttt{curComb}. 
The reader may be unsuccessful in acquiring the lock if an updater has already acquired the lock. 
Such a situation can occur if, between the load on line~\ref{alg:cxread:read:curcomb-load} and the call to \texttt{sharedTryLock()} on line~\ref{alg:cxread:read:shared-lock}, the \texttt{curComb} advances to another \texttt{Combined} instance and an updater takes the exclusive lock on the previous instance.
This results in a lock-free progress condition for read operations, as it would only fail to progress in case an updater had made progress.
To guarantee wait-free progress, the reader must publish its operation in the queue after \texttt{MAX\_READ\_TRIES} attempts (line~\ref{alg:cxread:enqueue-node}), but it is not required to apply all mutations up to its own operation. 
After a maximum of \texttt{maxThread} transitions of \texttt{curComb} the reader's operation will be processed by an updater thread and become visible.
In case there is no updater thread to process the read operation, this implies that there is also no updater thread to block the reader thread from acquiring the lock on \texttt{curComb}. 

\begin{figure*}
	\centering
	\includegraphics[scale=0.65]{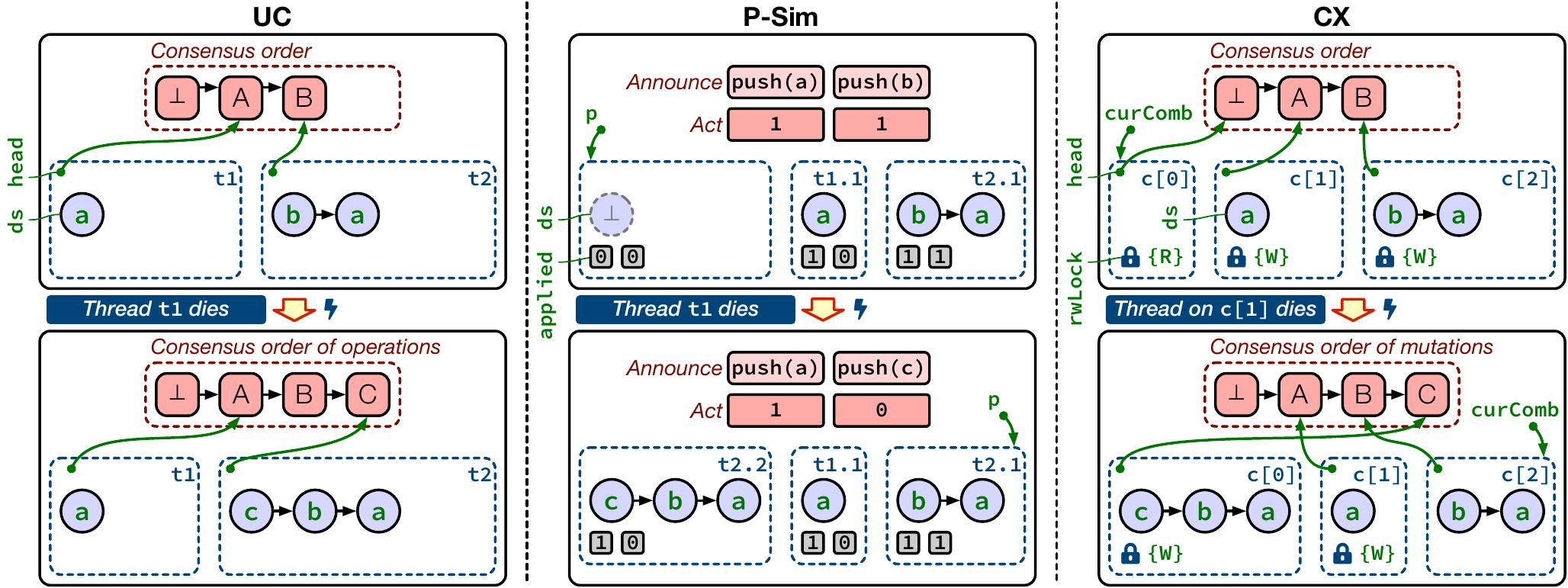}
	\caption{
		Comparison of UC~\cite{maurice1990methodology}, \textsf{P-Sim}~\cite{fatourou2011highly} and \CX.
	}
	\label{fig:comparison}
\end{figure*}

To better understand the differences between Herlihy UC~\cite{maurice1990methodology}, P-Sim~\cite{fatourou2011highly} and \CX, we show in \autoref{fig:comparison} a comparison example where two writer threads are pushing elements to a stack.
Both threads reach a consensus where the operation \texttt{push(a)} will be executed followed by \texttt{push(b)}.
Thread $T_1$ terminates abruptly while executing operation \texttt{push(a)}, while thread $T_2$ continues execution pushing element \texttt{c} to the stack.
With Herlihy's UC, every thread maintains its own copy of the data structure and each copy has to know the exact order in which the operations must be executed.
Because $T_2$ has no way to know if $T_1$ has died or is just delayed, the order of the operation from $A$ onwards has to be kept, which will cause it to grow indefinitely, eventually exhausting the system's memory.

P-Sim uses a copy-on-write approach with Herlihy's combining consensus~\cite{herlihy1993methodology}.
Each thread performs a copy of the object referenced by $P$ and applies to its copy all operations newly published in the \emph{announce} array.
There is no consensus order that all threads agree upon; instead, the thread that is able to transition $P$ to its copy is the one establishing the order of the operations.
In case thread $T_1$ dies, its copy (\texttt{t1.1}) is left unreclaimed but this has no impact on the execution of thread $T_2$.

CX has a pool of 2$\times$\texttt{maxThreads} copies of the data structure that all threads can use to execute their operation, and uses a \emph{turn queue} for consensus~\cite{ramalhete2017poster}.
In the example, there are four available copies in the pool but the two threads only used three \texttt{Combined} instances.
In the scenario where a thread dies while executing on the second one (\texttt{c[1]}), this instance is no longer available to the remaining threads.
But in case the thread died after releasing the exclusive lock, then the instance would remain available to be used.
Also, the order of the mutations can be disposed off up until \texttt{curComb.head}, because at any given time there is an object referenced by \texttt{curComb}, that is protected by a shared lock, from which any thread can execute a copy.
In addition, there are at most \texttt{maxThread} operations after \texttt{curComb.head} that remain to be executed.

\subsection{Reader-Writer Lock with Strong Trylock}
\label{src:tryLock}

Access to each \texttt{Combined} instance and, consequently, to each copy of the object is managed by a reader-writer lock, \texttt{Combined.rwlock}.
In order to ensure wait-free progress, the reader-writer lock has to guarantee that from all the threads competing for the lock at least one will acquire it, a guarantee sometimes called \emph{deadlock freedom for trylock}, and furthermore the \texttt{trylock()} method must complete in a finite number of steps~\cite{correia2018strong}.

Based on these requirements, we chose to use the \textsf{StrongTryRWRI} reader-writer lock proposed in~\cite{correia2018strong}.
This lock's high scalability is capable of matching other state of the art reader-writer locks~\cite{calciu2013numa} while providing \texttt{downgrade()} functionality and strong trylock properties.
In addition, CX requires \emph{lock handover} between different threads when in shared mode.
In \CX, the \texttt{rwlock} of each \texttt{Combined} instance can be in one of four logical states:
unlocked; shared, \ie, read-only; exclusive, \ie, read-write; or handover.
The handover state, which is not typical in \texttt{rwlock} implementations, represents a state in which the lock is left in shared (read-only) mode without any thread actually using it, with the purpose of preventing writers from acquiring the lock in exclusive mode, \texttt{handoverLock()} or \texttt{downgradeToHandover()} will leave the lock in handover state.
The \texttt{rwlock} implementation allows the unlock of the shared mode by a different thread from the one which acquired the lock in shared mode, \texttt{handoverUnlock()}.



\subsection{Wait-Freedom}

The \texttt{applyUpdate()} method has only one loop where the number of iterations is not predetermined (line~\ref{alg:cxmut:apply:last-node}). 
For it to terminate, the traversal of the wait-free queue must encounter the node containing the process' update operation. 
The process starts by appending its update operation with sequence number $l$ to the wait-free queue and will proceed to acquire an exclusive lock for \texttt{Combined} instance \COMB{i}.
It is guaranteed by \autoref{prop:2} that each process will always have available two \texttt{Combined} instances to execute, even if all other threads fail holding two Combined instances locked, one in exclusive and another in shared mode.
For the process to execute the loop at line~\ref{alg:cxmut:apply:last-node}, \COMB{i}'s' state must be <$O_{i,j},\HEAD{i,j}$> where $l>j$, otherwise \texttt{applyUpdate()} would return at line~\ref{alg:cxmut:return-result-1}. 
From \autoref{prop:3}, \COMB{i}'s state will transition to <$O_{i,l},\HEAD{i,l}$> in $l-j$ iterations, unless a copy of object $O$ is required (line~\ref{alg:cxmut:copy:start}).
In case the process is unable to do a copy, it will return at line~\ref{alg:cxmut:return-result-2}, thus terminating the loop.
Otherwise, the copy from a \texttt{Combined} instance \COMB{k} referenced by \CURCOMB with state <$O_{k,m},\HEAD{k,m}$> is executed at line~\ref{alg:cxmut:copy-obj}.
The copy from $O_k$ is guaranteed to execute in a finite number of steps because \COMB{k} is protected by a shared lock, which guarantees that no update operation is taking place during the copy procedure.
After the copy is completed, \COMB{i} will be in the state <$O_{i,m},\HEAD{i,m}$>. 
The copy was performed from a \texttt{Combined} instance referenced by \CURCOMB, and \autoref{corollary:1} guarantees that $l-m \leq \texttt{maxThreads}$.
Consequently, the loop at line~\ref{alg:cxmut:apply:last-node} will iterate at most \texttt{maxThreads} times after the copy procedure.
In all possible scenarios, the loop will always iterate a finite number of steps.
The \texttt{applyUpdate()} method also calls \texttt{enqueue()} at line~\ref{alg:cxmut:enqueue-node} and the try-lock methods, \texttt{exclusiveTryLock()}, \texttt{exclusiveUnlock()}, \texttt{sharedTryLock()}, \texttt{sharedUnlock()}, \texttt{downgradeToHandover()} and \texttt{handoverUnlock()}. 
By definition, all these methods return in a finite number of steps, from which we conclude that \texttt{applyUpdate()} has wait-free progress.
In addition, assuming the sequential copy of an object is bounded, then the \texttt{applyUpdate()} method is also wait-free bounded.
The loop at line~\ref{alg:cxmut:apply:last-node} is the only one that can be unbounded.
Reclamation of the nodes is done once the thread-local circular buffer is full, which implies the queue is composed of a limited number of nodes. Considering that $size_\mathit{circbuff}$ represents the size of the circular buffer, then the maximum amount of iterations is bounded by:
\begin{equation}
size_\mathit{circbuff} \times maxThreads + maxThreads
\end{equation}
The \texttt{applyRead()} method iterates for a maximum of $\texttt{MAX\_TRIES}+\texttt{maxThreads}$. 
It calls \texttt{sharedTryLock()} and \texttt{sharedUnlock()}, which by definition return in a finite number of steps, resulting in wait-free progress. 
Refer to \autoref{sec:correctness} for the definitions of variables, Propositions and Corollary.


\section{Evaluation}
\label{sec:evaluation}

We now present a detailed evaluation of \CX and compare it with other state-of-the-art UCs and non-blocking data structures, using synthetic benchmarks.
Our microbenchmarks were executed on a dual-socket 2.10\;GHz Intel Xeon E5-2683 (``Broadwell'') with a total of 32 hyper-threaded cores (64 HW threads), running Ubuntu LTS and using \texttt{gcc} 7.2. 

\begin{figure*}[htbp!]
	\centering
	\includegraphics[scale=0.69,trim={25mm 12pt 0 12pt},clip]{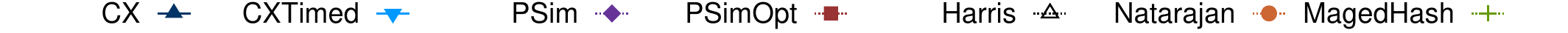}\\
	\hspace{-3mm}\includegraphics[scale=0.69]{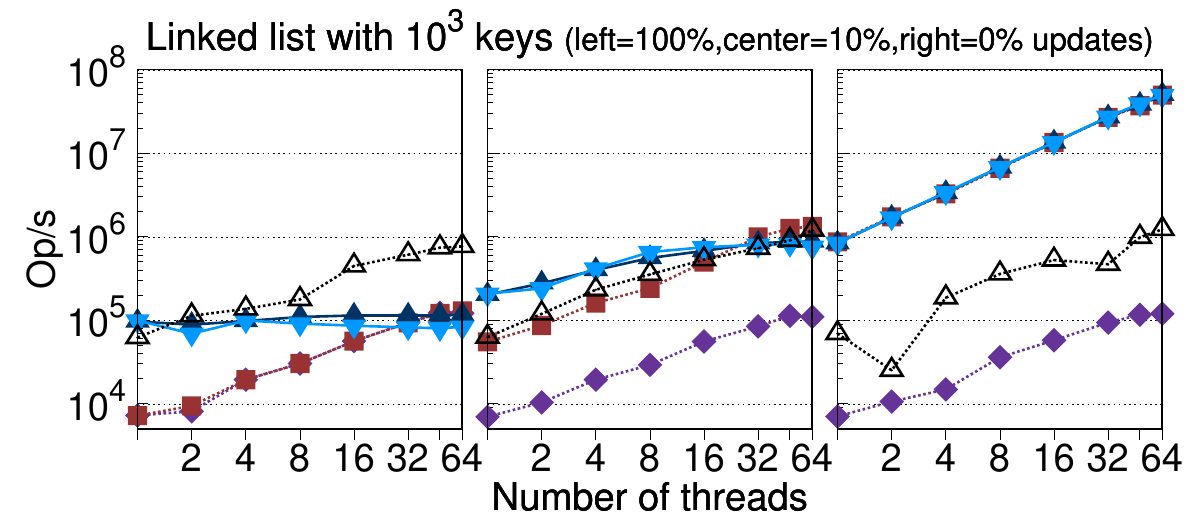}\hspace{2mm}\includegraphics[scale=0.69,trim={7mm 0 0 0},clip]{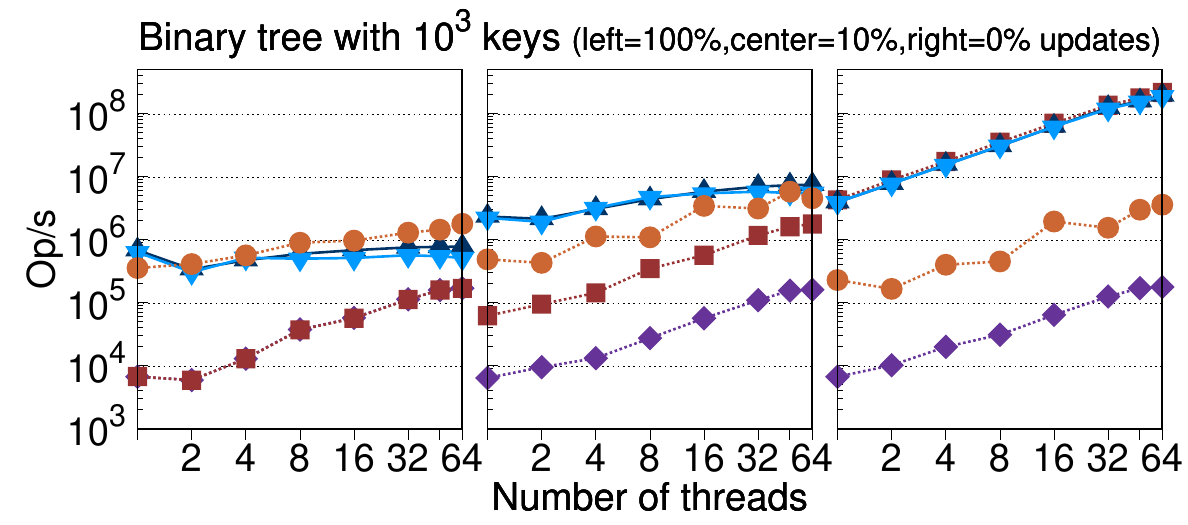}\\
	\hspace{-3mm}\includegraphics[scale=0.69]{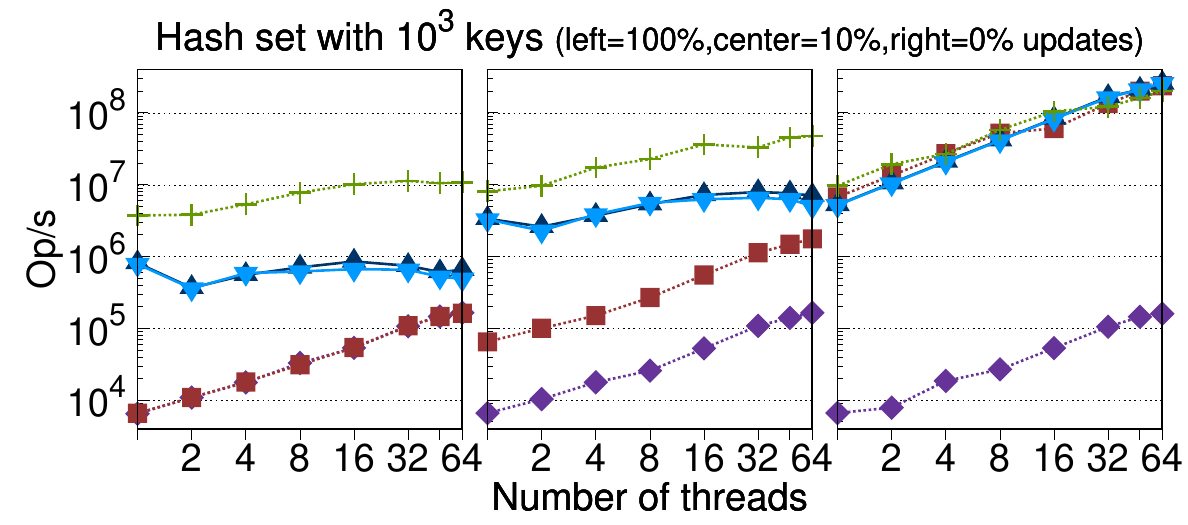}\hspace{2mm}\includegraphics[scale=0.69,trim={7mm 0 0 0},clip]{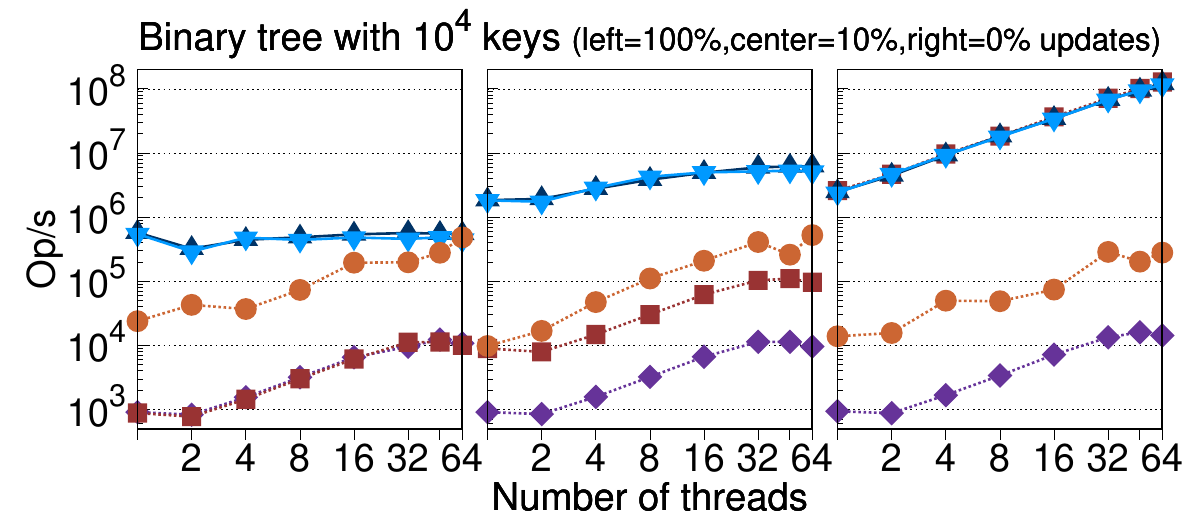}\\
	\hspace{-3mm}\includegraphics[scale=0.69]{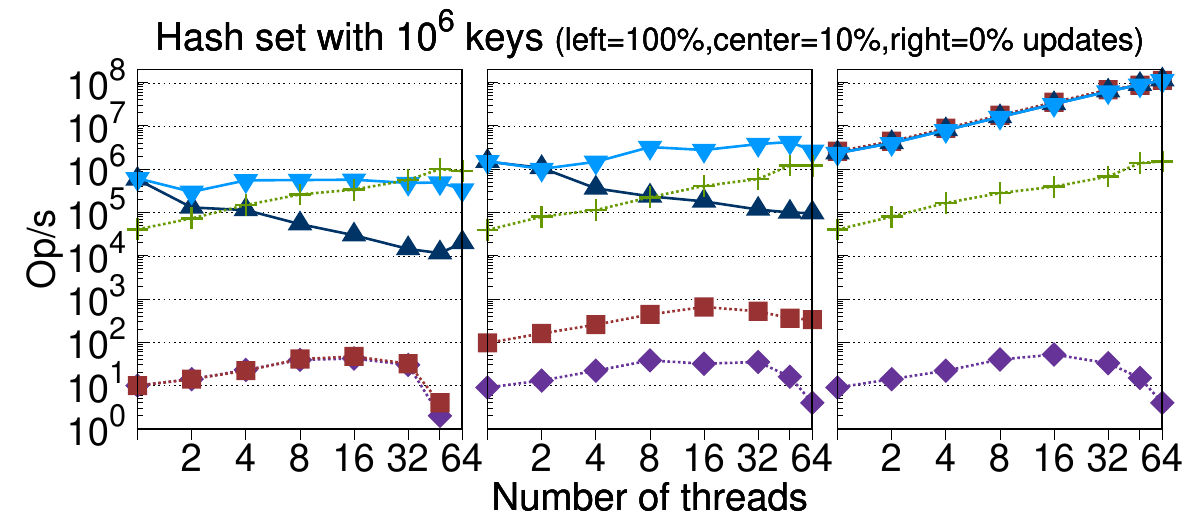}\hspace{2mm}\includegraphics[scale=0.69,trim={7mm 0 0 0},clip]{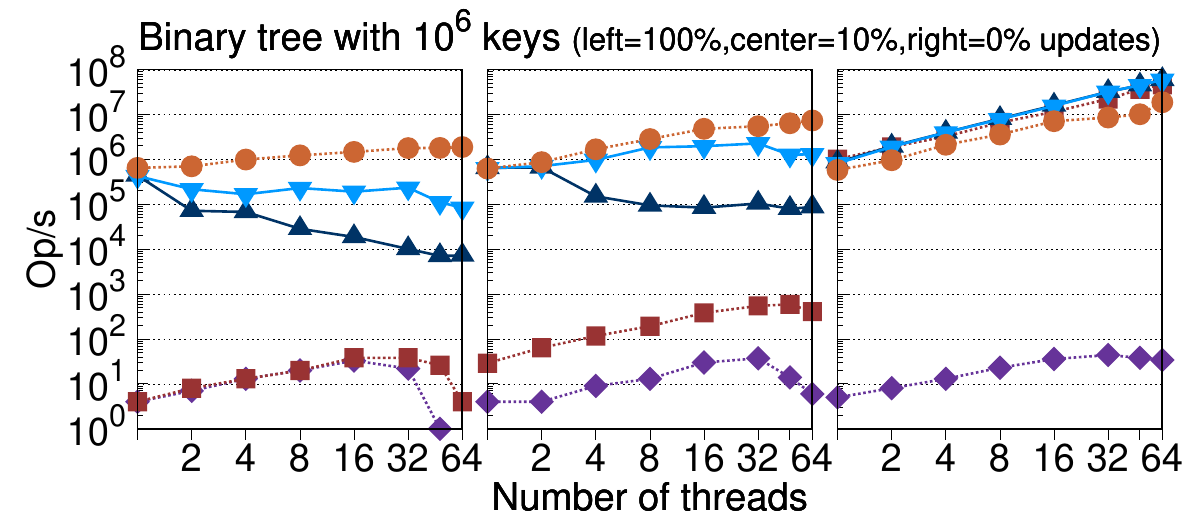}
	\caption{Left column (top to bottom): sets implemented using a linked list with $10^3$ keys and using a hash table with $10^3$ and $10^6$ keys. Right column: sets implemented using the balanced binary search trees std::set, with $10^3$, $10^4$ and $10^6$ keys. The results are presented with a logarithmic scale on both axes.}
	\label{fig:set-all}
\end{figure*}

Besides \CX, we used two other UCs which we now describe.
\texttt{PSim}~\cite{fatourou2011highly} is a UC with wait-free progress.
We adapted the original implementation available on \texttt{github}.
\texttt{PSimOpt} is an extension to \texttt{PSim}, where read-only operations have a different code path that allows them to scale, using a technique similar to the one we developed for \CX.
We also added a modified version of \CX, called \CXT, which restricts the amount of available \texttt{Combined} instances to four for a bounded period of time.
For \CXT, a thread is initially restricted to acquiring an exclusive lock on the first 4 \texttt{Combined} instances for a duration that corresponds to the time it takes to do a copy of the object.
After that amount of time has elapsed and its operation remains to be executed, the thread will acquire an exclusive lock on one of the 2$\times$\texttt{maxThreads} instances.
This can be seen as a blocking fast path with only 4 available \texttt{Combined} instances that can always revert to a slower path with wait-free progress. 
This approach further reduces the amount of object copies, because with high probability the first 4 instances are kept up to date.

Depending on the benchmark, we also compared with commonly available lock-free data structures.
\texttt{MagedHP} is a Harris linked list set modified by \citet{michael2002high}.
\texttt{Natarajan} is the relaxed tree by \citet{natarajan2014fast}.
\texttt{MagedHash} is the hash table by \citet{michael2002high}.
All implementations use Hazard Pointers~\cite{michael2004hazard}. 

All these data structures are \emph{sets} and the microbenchmarks described next have the same procedure.
A set is filled with 1,000 keys and we randomly select doing either a lookup or an update, with a probability that depends on the percentage of updates for each particular workload.
For a \emph{lookup}, we randomly select one key and call \texttt{contains(key)};
for an \emph{update}, we randomly select one key and call \texttt{remove(key)}, and if the removal is successful, we re-insert the same key with a call to \texttt{add(key)}, thus maintaining the total number of keys in the set (minus any ongoing removals).
Depending on the scenario, the procedure may be repeated for sets of different key ranges.
Each run takes 20 seconds, where a data point corresponds to the median of 5 runs.
All implementations will be available publicly.

The results of our experiments are shown in \autoref{fig:set-all}, with a log-log scale.
As expected, a sorted linked list protected by \CX is surpassed in most workloads by Maged-Harris' lock-free set because of the serialization of all operations in the wait-free queue necessary to reach consensus.
It is interesting to notice, however, that Maged-Harris algorithm is not able to outperform \CX in the scenario of $10\%$ updates.
\CX read operations do not require any pointer tracking during traversal because the data structure where the operation is executed is protected by a shared lock, which is not the case for traversals with Maged-Harris.

Let us now consider experiments with hash sets.
The \texttt{MagedHash} algorithm uses a pre-allocated array of 1,000 buckets and its advantage over \CX is significant, due to \CX serializing all mutative operations, while updates on the \texttt{MagedHash} are mostly disjoint.
However, when we insert one million keys, the fact that there are only 1,000 buckets causes increased serialization of the update operations, giving \CX an edge in nearly all scenarios.
Currently there is no known efficient hand-made lock-free resizable hash set with lock-free memory reclamation.

Regarding balanced trees, three different workloads were executed with $10^3$, $10^4$ and $10^6$ keys, shown in \autoref{fig:set-all}.
Natarajan's tree is not shown for $10^6$ keys because it takes two hours to fill up, making it unsuitable for such a scenario.
This occurs because it is a \emph{non-balanced} tree and our benchmarks execute a consecutive fill of the keys, causing this tree to effectively become a linked list of nodes because it is \emph{never} rebalanced.
There are a few lock-free balanced trees in the literature~\cite{brown2014general}, however, there is no known implementation with hazard pointers or any other lock-free memory reclamation.  
Balanced trees like the \texttt{std:set} sequential implementation we use in CX do not suffer from these issues.
For a small tree with $10^3$ keys, with $100\%$ updates, \CX is the most efficient for single-threaded execution and is not far behind the lock-free tree for the remaining thread counts.
As the ratio of read-only operations increase, \CX improves and at $10\%$ updates it is able to beat the lock-free tree, irrespective of the number of threads.
For a tree with $10^4$ and $10^6$ keys, \CX has the advantage on all tested scenarios.

The two \CX implementations evaluated in this section give high scalability for read-mostly workloads regardless of the underlying data structure.
Read-only operations in \CX can almost always acquire the shared lock after a few trylock attempts, which implies that the synchronization cost is a few sequentially consistent stores.
This high throughput surpasses equivalent lock-free data structures, while providing wait-free progress and linearizable consistency for any operation.
For high update workloads, equivalent lock-free data structures may have higher performance than \CX.

As for other UCs, \texttt{PSim} drags far below in all tested scenarios due to serializing all operations, even though it has been until now the best of the non-interposing UCs, easily surpassing Herlihy's original wait-free UC (not shown in this paper).
Our optimized implementation with scalable reads, \texttt{PSimOpt}, greatly improves the throughput on workloads with $0\%$ updates but as soon as the number of update operations increase it shows similar performance when compared with \texttt{PSim}.
Source code for CX and benchmarks is available here \url{https://github.com/pramalhe/CX}

\subsection{Memory Usage}

For large data structures, the amount of memory required to execute the program can be a determining factor when choosing a more suitable concurrency synchronization.
We conducted an experiment meant to evaluate the trade-off between memory usage and throughput.
The experiment follows the same procedure as for update-only workloads, using the balanced binary search tree available in STL, with pre-filled trees of 1 and 10 million keys.
The maximum memory usage is measured executing the same microbenchmark.
Each data point of \autoref{fig:memory-usage} is the highest value of two runs, each run executing for 100 seconds.

\begin{figure*}[htbp!]
	\centering
	\includegraphics[scale=0.72]{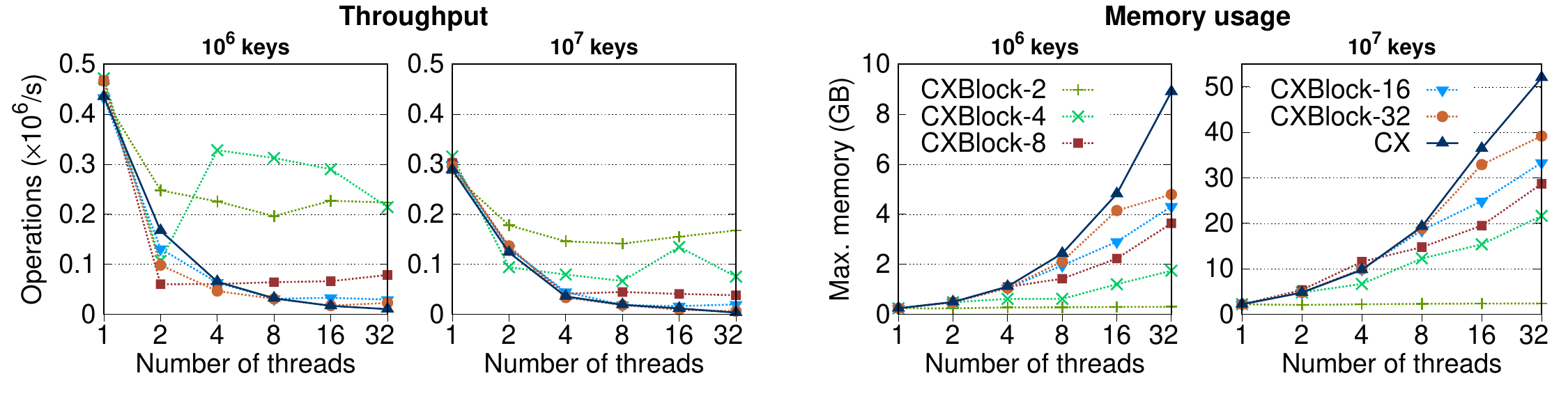}
	\caption{Throughput (left) vs. maximum memory usage (right) with 100\% updates for a pre-filled \texttt{std::set} of $10^6$ and $10^7$ keys. \texttt{CXBlock-k} represents the \CX universal construct where $maxObjs=k$. \CX is the wait-free UC where the maximum number of objects is 64.}
	\label{fig:memory-usage}
\end{figure*}



We observe in \autoref{fig:memory-usage} that, unsurprisingly, the configuration with the lowest memory usage is when \CX has $maxObjs$ set to 2.
It is only using two \texttt{Combined} instances with a constant maximum memory usage around 280\;MB.
When increasing the size of the data structures from $10^6$ to $10^7$ keys, we observe that memory usage grows as expected by a factor of ten, around 2\;GB.

On the other hand, \CX with wait-free progress (\ie, when $maxObjs$ is set to 64) is the configuration with the highest memory requirements.
As the number of threads grows, we observe both an increase in memory usage and a decrease in throughput, but this is compensated by the additional guarantees of resilience in case of thread failures.
Our experiments also show that, for the tree with 1 million keys, \texttt{CXBlock-4} sometimes achieves better performance than \texttt{CXBlock-2}.
We can reach the conclusion that, if the application can relax the progress guarantees for update operations, then a suitable configuration would be to use up to 4 object instances.
This would provide a good trade-off between memory usage, throughput and progress.

\section{Conclusion}
\label{sec:conclusion}

The appeal of generic techniques like wait-free universal constructs (UC) stems from the difficulty in designing \emph{hand-written} non-blocking data structures.
These constructs can transform any sequential implementation of a data structure into a correct wait-free data structure, with linearizable consistency for \emph{all} operations.

\CX is the first non-instrumenting UC capable of transforming \emph{any} sequential implementation of a data structure with unforeseen method implementations to be considered for multi-threaded applications with performance that rivals and surpasses hand-made lock-free implementations.

Moreover, \CX has integrated wait-free memory reclamation, a feature that most hand-written lock-free data structures do not provide.
Using CX we have implemented the first wait-free binary \emph{balanced} tree, showing that CX makes it possible to create new data structures for which no hand-made counterparts exist yet.
\CX's huge leap in performance compared with previous UCs is due to the significant reduction of copy operations, where available copies are instead reused and updated.

\bibliographystyle{ACM-Reference-Format}
\bibliography{references}
\pagebreak
\newpage

\appendix

\section{Correctness}
\label{sec:correctness}

\newcommand{\OP}[1]{\ensuremath{\mathit{op}_{#1}}\xspace}
\newcommand{\OPINV}[1]{\ensuremath{\mathit{op}_{#1}.\mathit{inv}}\xspace}
\newcommand{\OPRES}[1]{\ensuremath{\mathit{op}_{#1}.\mathit{res}}\xspace}

We discuss the correctness of \CX using standard definitions and notations for linearizability~\cite{linearizability}, which we briefly recall below.
A concurrent execution is modeled by a history, \ie, a sequence of events.
Events can be operation invocations and responses, denoted respectively as \OPINV{} and \OPRES{}.
Each event is labeled with the process and with the object $O$ to which it pertains.
A subhistory of a history $H$ is a subsequence of the events in $H$.
A response matches an invocation if they are performed by the same process on the same object.
An operation in a history $H$ consists of an invocation and the next matching response.
An update operation may cause a change of state in the object, with a visible effect to other processes, while a read-only operation has no effects visible to other processes.
An invocation is pending in $H$ if no matching response follows it in $H$.
An extension of $H$ is a history obtained by appending responses to zero or more pending invocations in $H$, and $\mathit{complete}(H)$ denotes the subhistory of H containing all matching invocations and responses.
All references to specific lines of code in this section refer to \autoref{alg:cxmut}.

\begin{defi}
\emph{(Happens before)}
if $\OPRES{1} <_\textsf{hb} \OPINV{2}$ then $\OP{1} <_\textsf{hb} \OP{2}$.
\end{defi}
\begin{defi}
\emph{(Subhistory)}
Given a history $H$, a subhistory $S$ of $H$ is such that if \OP{2} belongs to $S$
and $\OP{1} <_\textsf{hb} \OP{2}$ in $H$, then \OP{1} belongs to $S$ and $\OP{1} <_\textsf{hb} \OP{2}$ in $S$.
\end{defi}
\begin{defi}
\emph{(Partial subhistory)}
Given a history $H$ and update operations \OP{1} and \OP{2}, a partial subhistory $S$ of $H$ is such that if \OP{2} belongs to $S$ and $\OP{1} <_\textsf{hb} \OP{2}$ in $H$, then \OP{1} belongs to $S$ and $\OP{1} <_\textsf{hb} \OP{2}$ in $S$.
\end{defi}
\begin{defi}
\emph{(Linearizability)}
A history $H$ is linearizable if $H$ has an extension $H'$ and there is a legal sequential subhistory $S$ such that:
\begin{enumerate*}[label=\emph{(\roman*)}]
\item $\mathit{complete}(H')$ is equivalent to $S$; and
\item if an operation $\OP{1} <_\textsf{hb} \OP{2}$ in $H$, then the same holds in $S$.
\end{enumerate*}
\end{defi}
\CX's correctness relies on a linearizable wait-free queue, and a linearizable reader-writer lock with strong guarantee for trylock methods.
The queue represents the sequence of update operations applied to the object $O$, establishing the partial history of the concurrent execution. 
For simplicity, we assume that the queue is formed by a sequence of nodes and each has a unique sequence number, which is monotonically increasing for consecutive nodes.

\CX requires a linearizable wait-free enqueue method and a linearizable traversal of the queue that provides a partial subhistory of the history $H$.
Regarding the linearizable reader-writer lock, it must provide strong guarantees for the \texttt{sharedTrylock()} and \texttt{exclusiveTrylock()} methods. 
Both methods must satisfy the property of \emph{deadlock-freedom}, \ie, the critical section will not become inaccessible to all processes, and their invocation must complete in a finite number of steps.

We denote by \COMB{i} the $i_\mathit{th}$ \texttt{Combined} instance of the \texttt{combs[]} array.
At any given moment, \COMB{i} is in a state represented by the pair <$O_{i,j},\HEAD{i,j}$>.
$O_{i,j}$ represents the $i_\mathit{th}$ simulation of object $O$ where $j$ corresponds to the sequence number in a node of the wait-free queue. 
As such, $O_{i,0}$ is the initialized object and $O_{i,j}$ is the simulation of object $O$ after sequentially applying all update operations up to the operation with sequence number $j$. 
We assume all \COMB{i} instances start in the initial state <$O_{i,0},\HEAD{i,0}$> where $\HEAD{i,0}$ is the sentinel node of the wait-free queue.
$\HEAD{i,j}$ represents a node with sequence number $j$ in the wait-free queue.
$O_{i,j}.\OP{j+1}()$ represents the execution of operation $\OP{j+1}()$ on $O_{i,j}$, and the resulting object $O_{i,j+1}$ will contain the effects of $\OP{j+1}()$.
We define \CURCOMB as the \texttt{Combined} instance on which read-only operations execute.

\begin{prop}
\label{prop:1}
\CURCOMB can only transition between different \texttt{Combined} instances both protected by a shared lock.
\end{prop}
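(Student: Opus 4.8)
The plan is to track every write to the shared variable \CURCOMB and argue that each such write moves it from one \texttt{Combined} instance that is protected by a shared lock to another that is also protected by a shared lock. There is essentially only one place in the code where \CURCOMB is modified after initialization, namely the successful CAS at line~\ref{alg:cxmut:cas}; the initialization itself establishes the base case, since one \texttt{Combined} instance starts out holding the initialized object in state <$O_{i,0},\HEAD{i,0}$> and, by the setup of the construct, is left with a shared lock held on it (the handover state). So the argument is really an invariant-maintenance argument: assuming \CURCOMB currently references some \COMB{k} on which a shared lock is held, show that any update that changes \CURCOMB leaves it referencing a \texttt{Combined} instance on which a shared lock is held, and that \COMB{k} is only released from its shared lock after the CAS has completed.

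First I would trace the relevant portion of \texttt{applyUpdate()}. An updater that reaches line~\ref{alg:cxmut:cas} has, by that point, (i) applied all mutations on \texttt{\_c} up to \texttt{\_myNode} and set \texttt{\_c.head}, and (ii) called \texttt{downgradeToHandover()} on line~\ref{alg:cxmut:downgrade-to-shared-lock}, so \texttt{\_c} is now in the handover state — shared, with no thread actively using it. Then inside the loop, before attempting the CAS, the thread calls \texttt{sharedTryLockCheckTkt(\_comb)} on line~\ref{alg:cxmut:lock-shared} on the \emph{current} \CURCOMB target; only if this succeeds (so a shared lock has actually been taken on the old target as well, and the ticket check passed) does it proceed to \texttt{curComb.cas(\_combIdx, \_idx)}. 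Hence at the moment the CAS succeeds: the new target \texttt{\_c} (index \texttt{\_idx}) is in handover/shared state, and the old target \texttt{\_comb} (index \texttt{\_combIdx}) has a shared lock held by this thread. After the CAS, the thread releases the old target's handover and shared locks (\texttt{handoverUnlock()} then \texttt{sharedUnlock()}), but only \emph{after} \CURCOMB has already moved away from it. So no write to \CURCOMB ever points it at an unlocked or exclusively-locked instance, and an instance that \CURCOMB points away from keeps its shared lock at least until the transition is complete. I would also note the CAS-failure branch does not write \CURCOMB at all, so it is irrelevant; and the final fallback \texttt{handoverUnlock()} at line~\ref{alg:cxmut:final-unlock} runs only when no CAS by this thread succeeded, which again means this thread did not transition \CURCOMB.

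The main obstacle I expect is the interleaving argument establishing that the old \CURCOMB target genuinely retains its shared protection across the transition, rather than being concurrently acquired in exclusive mode. This requires leaning on the handover mechanism and on \texttt{sharedTryLockCheckTkt}: because the instance \CURCOMB points to was left in handover state by whichever updater installed it (inductive hypothesis), a competing updater cannot take it exclusively, so \texttt{sharedTryLockCheckTkt} can only fail due to the ticket test (in which case \CURCOMB has already advanced past \texttt{\_myNode}, and this thread stops trying to write \CURCOMB) — it cannot fail because some writer stole exclusive access. Pinning down that the handover lock on the old target is released strictly after the CAS, and that at all times between two consecutive writes of \CURCOMB the referenced instance is shared-locked, is the delicate part; everything else is a straightforward walk through the \texttt{applyUpdate()} pseudo-code. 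I would close by observing that \autoref{prop:1} then follows by induction on the sequence of successful CAS operations on \CURCOMB.
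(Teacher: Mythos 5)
Your overall strategy is the same as the paper's: after initialization, the only write to \CURCOMB is the CAS on line~\ref{alg:cxmut:cas}; the new target was acquired in exclusive mode on line~\ref{alg:cxmut:lock-exclusive} and downgraded to a shared (handover) state on line~\ref{alg:cxmut:downgrade-to-shared-lock} \emph{before} the CAS; and the old target is shared-locked by the inductive hypothesis. Your extra observations about \texttt{sharedTryLockCheckTkt} and about releasing the old target's locks only after the CAS are sound and are in fact more detailed than what the paper records.

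There is, however, one part of the statement you never establish: that the transition is between \emph{different} \texttt{Combined} instances, i.e., that at the moment of the successful CAS the instance currently referenced by \CURCOMB cannot already be \texttt{\_c}. You silently treat ``the new target'' and ``the old target'' as distinct. The lock semantics alone do not exclude the degenerate case: after \texttt{downgradeToHandover()}, \texttt{\_c} is in a shared state, so a subsequent \texttt{sharedTryLockCheckTkt} on it could succeed if \CURCOMB happened to point at it. The paper closes this hole with a dedicated argument: \texttt{\_c} was not the \CURCOMB target when it was acquired exclusively (that target is shared-locked, so the exclusive trylock on it would fail), and no other process can subsequently install \texttt{\_c} into \CURCOMB, because doing so would require that process to first hold \texttt{\_c}'s lock in exclusive mode --- impossible while the current thread holds it in exclusive and then handover mode. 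Together these give $\mathit{newComb} \neq \mathit{lComb}$. Adding that argument would complete your proof; everything else matches the paper.
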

\begin{proof}
At the start of the execution, \CURCOMB references a \texttt{Combined} instance protected by a shared lock.
The state transition of \CURCOMB occurs in line~\ref{alg:cxmut:cas} between two \texttt{Combined} instances, referred as $lComb$ and $newComb$.
The lock associated with the $newComb$ instance is acquired in exclusive mode in line~\ref{alg:cxmut:lock-exclusive} and is later downgraded to shared mode in line~\ref{alg:cxmut:downgrade-to-shared-lock}. 
It is not possible for $newComb$ to be the $lComb$ because $lComb$ is protected by a shared lock.
From the moment a process $q$ acquires the exclusive lock protecting $newComb$, until the state transition in line~\ref{alg:cxmut:cas}, other processes may change \CURCOMB from $lComb$ to reference another \texttt{Combined} instance. 
However, those processes will not be able to change \CURCOMB to reference $newComb$.
Any other process attempting to transition \CURCOMB will have to first acquire an exclusive lock on a \texttt{Combined} instance, and it is impossible that this instance is $newComb$ because $newComb$'s lock is held by process $q$.
As such, \CURCOMB can only transition to a different \texttt{Combined} instance and that instance's lock is held in shared mode.
\end{proof}

\begin{prop}
\label{prop:2}
At most $2\times\texttt{maxThreads}$ \texttt{Combined} instances are necessary to guarantee that an update operation will acquire an exclusive lock on one of the \texttt{Combined} instances.
\end{prop}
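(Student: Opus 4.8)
The plan is to count, for each process, how many \texttt{Combined} instances it can simultaneously hold a lock on --- in \emph{any} mode: exclusive, shared, or handover --- and then sum these per-process bounds over all participating processes. For a \emph{reader}, inspecting \texttt{applyRead()} shows it only takes \texttt{sharedTryLock()} on the single instance referenced by \CURCOMB, validates, and calls \texttt{sharedUnlock()} before the next loop iteration; after \texttt{MAX\_READ\_TRIES} it merely enqueues a node and acquires no further lock. So a reader holds a lock on at most one \texttt{Combined} instance at any instant.

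Next I would trace \texttt{applyUpdate()} and argue that an updater holds a lock on at most two \texttt{Combined} instances at any instant. The process first obtains \_c in exclusive mode via \texttt{exclusiveTryLock()}. A second instance is held only in two transient windows: (i) inside the copy branch, \texttt{getCombined()} returns with \CURCOMB shared-locked, a lock released by \texttt{sharedUnlock} immediately after \_c's object is overwritten on line~\ref{alg:cxmut:copy-obj}; and (ii) after \texttt{downgradeToHandover()} on \_c, the publication loop repeatedly takes a shared lock on the current \CURCOMB via \texttt{sharedTryLockCheckTkt()} and drops it on each iteration. In both windows at most one additional instance is held, and by \autoref{prop:1} that instance is \CURCOMB held in shared mode, hence necessarily distinct from the exclusively-held \_c. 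I would also verify that all four early-return paths (lines~\ref{alg:cxmut:return-result-1}, \ref{alg:cxmut:return-result-2}, \ref{alg:cxmut:return-result-3}, \ref{alg:cxmut:return-result-4}) release every lock held, so that no execution path ever holds a third instance at once.

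The counting step then goes as follows. At the moment a process $p$ is about to call \texttt{exclusiveTryLock()} it holds no \texttt{Combined} lock, so the instances rendered unavailable by the \emph{other} processes number at most $2(\texttt{numUpdaters}-1)+\texttt{numReaders}$; using $\texttt{numReaders}+\texttt{numUpdaters}\le\texttt{maxThreads}$ this is at most $2\,\texttt{maxThreads}-2$. Hence, with an array of $2\times\texttt{maxThreads}$ \texttt{Combined} instances, at least two are entirely free. It then remains to invoke the \emph{strong trylock} guarantee of \textsf{StrongTryRWRI}~\cite{correia2018strong}: because \texttt{exclusiveTryLock()} scans \texttt{combs[]} in round-robin order and deadlock-freedom for trylock prevents $p$ from failing forever while free entries persist, $p$ is guaranteed to secure an exclusive lock on one of the available instances within at most $2\times\texttt{maxThreads}$ attempts.

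The main obstacle I anticipate is the updater case analysis --- pinning down that the configuration ``handover on \_c plus shared on \CURCOMB'' arising in the publication loop is truly the maximum, so that the per-updater bound is $2$ and not $3$, and checking every early exit. The second subtle point is the tightness of the edge case: when $\texttt{numUpdaters}=\texttt{maxThreads}$ all $2\times\texttt{maxThreads}$ instances could in principle be locked at once, so the argument must exploit that the process currently searching for \_c contributes nothing to the occupied count --- which is exactly why a pool of size $2\times\texttt{maxThreads}$, rather than $2\times\texttt{maxThreads}-1$, is needed. One also has to reconcile the case of several updaters searching concurrently, where \autoref{prop:1} together with the strong-trylock property supplies the needed progress.
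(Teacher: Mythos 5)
Your overall strategy --- bound the number of \texttt{Combined} instances each process can render unavailable, observe that the searching process contributes nothing at the moment it calls \texttt{exclusiveTryLock()}, and invoke the strong-trylock guarantee to convert ``a free instance exists'' into ``the exclusive acquisition succeeds'' --- is the same counting argument the paper uses, and the reader/updater per-process bounds of $1$ and $2$ are the right numbers. There is, however, a genuine gap in how you establish the per-updater bound of $2$: you derive it by tracing a \emph{single} invocation of \texttt{applyUpdate()}, but the handover lock on the instance published as \CURCOMB (installed by \texttt{downgradeToHandover()} on line~\ref{alg:cxmut:downgrade-to-shared-lock} and deliberately \emph{not} released on the successful-CAS return path) outlives that invocation. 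When the same updater re-enters \texttt{applyUpdate()}, it exclusively locks a fresh \texttt{\_c} while its previously published instance is still tied up in handover state; if it then also takes a shared lock during the copy or publication phase, a naive count charges it with \emph{three} instances. Closing this requires the extra observation that \CURCOMB only advances, so any shared lock this process acquires while its old instance is still the current \CURCOMB is necessarily on that very instance (hence no third), and that once another process publishes a newer \CURCOMB it also performs the \texttt{handoverUnlock()} that discharges the residual. The paper's proof, for all its informality, is aimed precisely at this cross-invocation scenario ($q_1$ ``leaves the other in handover state'' and then re-enters), which your single-invocation trace does not cover --- and you yourself flagged the $2$-versus-$3$ question as the crux without resolving it.

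A second, smaller imprecision: once the residual handover is accounted for, the searching process $p$ may itself be charged with one instance (the current \CURCOMB, if $p$ was its publisher), so the correct conclusion is that \emph{at least one} instance is entirely free, not two. This is still sufficient for the proposition, because the second instance an updater needs is only ever acquired in shared mode on \CURCOMB, which by \autoref{prop:1} is permanently held in a shared-compatible state and therefore never needs to be free in the exclusive sense.
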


\begin{proof}
A process executing \texttt{applyUpdate()} will require at most two \texttt{Combined} instances at any given time, the acquisition of an exclusive lock at line~\ref{alg:cxmut:lock-exclusive} and a shared lock at line~\ref{alg:cxmut:acquire-shared} or~\ref{alg:cxmut:lock-shared}.
Assuming the reader-writer trylock methods guarantee that no available \texttt{Combined} instance can remain inaccessible to all competing processes, this implies that any process that failed to acquire a lock in a \texttt{Combined} instance is sure that the instance is in use by a competing process.
By induction, lets consider that processes $q_1, \ldots, q_{n-1}$ use $2\times(\texttt{maxThreads}-1)$ \texttt{Combined} instances. The last process $q_n$ will have available the last two \texttt{Combined} instances.
Considering that process $q_1$ releases the shared lock of \COMB{i} and leaves the other in handover state.
In a subsequent call to \texttt{applyUpdate()}, process $q_1$ will acquire the exclusive lock on \COMB{i} because this is the first available \texttt{Combined} instance when traversing the \texttt{combs[]} array, and the shared lock will be acquired on one of the $2\times(\texttt{maxThreads}-1)$ \texttt{Combined} instances that can potentially be \CURCOMB.
In the event that process $q_n$ transitions \CURCOMB to one of its two available instances, then process $q_1$ can acquire that \texttt{Combined} instance in shared mode but it will leave a precedent \texttt{Combined} instance available to be acquired in exclusive mode by process $q_n$.
This shows that there will always be two \texttt{Combined} instances available to process $q_n$, the maximum it may need.
\end{proof}

\begin{prop}
\label{prop:3}
For any \COMB{i}, an update operation with sequence $l$ will transition atomically from <$O_{i,j},\HEAD{i,j}$> to <$O_{i,l},\HEAD{i,l}$>.
\end{prop}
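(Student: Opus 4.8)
The plan is to show that every modification the process makes to $\COMB{i}$ happens inside a single critical section guarded by $\COMB{i}$'s exclusive lock, so that no other process can observe a partially-updated configuration --- which makes the transition atomic --- and then to verify that the configuration published when the lock is released is exactly $\langle O_{i,l},\HEAD{i,l}\rangle$. For the first part I would delimit the critical section: the process takes $\COMB{i}$'s lock in exclusive mode at line~\ref{alg:cxmut:lock-exclusive} (which succeeds by \autoref{prop:2}) and releases it only at the downgrade of line~\ref{alg:cxmut:downgrade-to-shared-lock}, or by an \texttt{exclusiveUnlock()} on an early-return path; every write to $\COMB{i}$'s \texttt{head} or \texttt{obj} --- the per-mutation store into \texttt{obj} at line~\ref{alg:cxmut:store-result}, the \texttt{updateHeadObj} at line~\ref{alg:cxmut:copy-obj}, and the final \texttt{head} assignment at line~\ref{alg:cxmut:update-head-2} --- lies strictly between those two points. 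By the mutual-exclusion guarantee of the linearizable reader-writer lock no other process can hold $\COMB{i}$'s lock meanwhile, and since any access to $\COMB{i}$'s internal state by another process is performed only after acquiring that lock (a reader via the shared lock at line~\ref{alg:cxread:read:shared-lock}, another updater via the exclusive lock at line~\ref{alg:cxmut:lock-exclusive} or the shared lock at line~\ref{alg:cxmut:acquire-shared}), no other process observes $\COMB{i}$ during this window. Hence the only configurations of $\COMB{i}$ visible to other processes are the one in force when the exclusive lock is taken and the one in force when it is released; placing the transition's linearization point at the \texttt{head} store of line~\ref{alg:cxmut:update-head-2} therefore makes the change atomic. (The write-races on \texttt{node.result} at line~\ref{alg:cxmut:store-result} by other updaters re-applying the same mutations are irrelevant here, since they touch only the shared \texttt{result} slot and not $\COMB{i}$.)

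For the second part I would identify the two endpoints. By hypothesis $\COMB{i}$ is $\langle O_{i,j},\HEAD{i,j}\rangle$ when the loop at line~\ref{alg:cxmut:apply:last-node} is entered, and, conditioned on not returning early at lines~\ref{alg:cxmut:return-result-1} or~\ref{alg:cxmut:return-result-2}, one has $l>j$. An induction on loop iterations then establishes the invariant that after $k$ consecutive non-copy iterations \texttt{\_mn} is the queue node with sequence number $j+k$ and \texttt{\_comb.obj} equals $O_{i,j+k}$: the base case is the assignment at line~\ref{alg:cxmut:apply:first-node}; the step uses the linearizable traversal of the wait-free queue --- the guard at line~\ref{alg:cxmut:copy:start} ensures that on a non-copy iteration \texttt{\_mn} sits in the live part of the queue, so \texttt{\_mn.next} is precisely the node with the next sequence number --- together with $O_{i,t+1}=O_{i,t}.\OP{t+1}()$, which is realized by the store at line~\ref{alg:cxmut:store-result}. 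Since sequence numbers are monotone and \texttt{\_myNode} carries sequence $l$, the loop exits with \texttt{\_mn} $=$ \texttt{\_myNode} and \texttt{\_comb.obj} $= O_{i,l}$ after $l-j$ iterations, whereupon line~\ref{alg:cxmut:update-head-2} sets $\COMB{i}$'s \texttt{head} to that node, so the published configuration is $\langle O_{i,l},\HEAD{i,l}\rangle$. When a copy is interposed at line~\ref{alg:cxmut:copy:start}, \texttt{updateHeadObj} at line~\ref{alg:cxmut:copy-obj} first re-seats $\COMB{i}$ to the state $\langle O_{i,m},\HEAD{i,m}\rangle$, where $m$ is the sequence number of the \texttt{head} of the instance $\COMB{k}$ referenced by \CURCOMB: this is legitimate because by \autoref{prop:1} \CURCOMB is protected by a shared lock, so the state $\langle O_{k,m},\HEAD{k,m}\rangle$ of $\COMB{k}$ is frozen for the whole copy (the \texttt{sharedUnlock()} on $\COMB{k}$ is issued only after \texttt{updateHeadObj} returns), hence the copied object is a faithful replica of $O_{k,m}$ and labelling it $O_{i,m}$ is consistent with the definition of the $i$-th simulation; the invariant above then re-applies from $m$ and the loop still finishes at sequence $l$.

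I expect the delicate point to be the copy sub-case: one must be sure the object pulled from \CURCOMB is a consistent snapshot --- which rests entirely on \autoref{prop:1} plus the placement of the shared unlock --- and that the subsequent in-loop mutations are applied to $\COMB{i}$'s own \texttt{obj}, so that it is $\COMB{i}$, and not the merely shared-locked $\COMB{k}$, that gets advanced. Getting the well-definedness of $O_{i,m}$ right is the crux; once that is in place the atomicity argument of the first part and the inductive endpoint computation of the second are routine, and the early-return paths are covered trivially since they too touch $\COMB{i}$ only under the exclusive lock.
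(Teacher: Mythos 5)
Your proof is correct and follows essentially the same route as the paper's: atomicity comes from holding $\COMB{i}$'s exclusive lock across all modifications and only publishing the new state at the downgrade, and the endpoint $\langle O_{i,l},\HEAD{i,l}\rangle$ comes from applying, via the linearizable queue traversal, the operations with sequence numbers $j+1,\dots,l$ in order before the final \texttt{head} store at line~\ref{alg:cxmut:update-head-2}. The one place you go beyond the paper is the copy sub-case --- re-seating $\COMB{i}$ to $\langle O_{i,m},\HEAD{i,m}\rangle$ from the shared-locked \CURCOMB and restarting the induction from $m$ --- which the paper's proof silently omits; your treatment of it (resting on \autoref{prop:1} and the placement of the shared unlock) is sound and arguably necessary for completeness.
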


\begin{proof}
Every \texttt{Combined} instance \COMB{i} is protected by a reader-writer trylock, granting exclusive access in line~\ref{alg:cxmut:lock-exclusive} by proposition 2 to only one process. This process will be allowed to mutate its state from the pair <$O_{i,j},\HEAD{i,j}$> to a subsequent state.
Only a process that is executing an update operation can acquire an exclusive lock on \COMB{i}. Its update operation was previously appended to the queue and we assume the sequence number of the operation is $l$.
Subsequently, the process will execute the statements from line~\ref{alg:cxmut:apply:start} to~\ref{alg:cxmut:apply:end}, where the initial simulated object is $O_{i,j}$.
$O_{i,j}$ will be subjected to the execution of the sequence of operations $op_{k}()$ where $k=j+1,...,l$, transitioning the simulated object to $O_{i,l}$.
The traversal of the queue is required to be linearizable.
The sequence of operations observed by a process traversing the queue is the same for all other processes.
All concurrent mutative operations applied to object $O$ were previously appended to the queue, and the queue establishes the partial history of the concurrent execution.
The simulated object has now mutated to $O_{i,l}$ and in line~\ref{alg:cxmut:update-head-2}, \COMB{i} state will transition to <$O_{i,j},\HEAD{i,j}$> where \HEAD{i,l} represents the node containing the last operation applied to the simulated object $O_{i,l}$.
Only after the transition to state <$O_{i,l},\HEAD{i,l}$> is completed, will the \texttt{Combined} instance \COMB{i} be made available to other processes, in line~\ref{alg:cxmut:downgrade-to-shared-lock}.
\end{proof}

\begin{prop}
\label{prop:4}
\CURCOMB always transitions from \COMB{i} to \COMB{k}, with respective states <$O_{i,j},\HEAD{i,j}$> and <$O_{k,l},\HEAD{k,l}$>, where $i \neq k$ and $l>j$.
\end{prop}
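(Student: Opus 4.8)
The plan is to pin down the unique point at which \CURCOMB changes value and then show that every such change meets both conditions. As already used in \autoref{prop:1}, \CURCOMB is written only by the CAS at line~\ref{alg:cxmut:cas}, so a transition of \CURCOMB is exactly a \emph{successful} execution of \texttt{curComb.cas} by some process $q$ inside \texttt{applyUpdate()}. First I would fix notation: let $\COMB{k}$ be the \texttt{Combined} instance $q$ acquired in exclusive mode at line~\ref{alg:cxmut:lock-exclusive} and later installs as \CURCOMB, let $l$ be the sequence number of the node $q$ appended to the queue at line~\ref{alg:cxmut:enqueue-node}, and let $\COMB{i}$ be the instance \CURCOMB references at the moment of the CAS (the value $q$ expects). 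Reaching line~\ref{alg:cxmut:cas} means $q$ did not return early at line~\ref{alg:cxmut:return-result-1} or line~\ref{alg:cxmut:return-result-2}, hence it ran the loop at line~\ref{alg:cxmut:apply:last-node} to completion and set $\COMB{k}.\texttt{head}$ to its own node at line~\ref{alg:cxmut:update-head-2}; by \autoref{prop:3}, $\COMB{k}$ has therefore reached state <$O_{k,l},\HEAD{k,l}$> by the time $q$ downgrades its lock at line~\ref{alg:cxmut:downgrade-to-shared-lock}, and it stays in that state through the CAS because $q$ never releases $\COMB{k}$'s lock (exclusive, then handover) between line~\ref{alg:cxmut:lock-exclusive} and line~\ref{alg:cxmut:cas}, so no other process can acquire it exclusively and rewrite its \texttt{head}.

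For $i\neq k$ I would appeal directly to \autoref{prop:1}, which already states that \CURCOMB transitions between two \emph{distinct} instances, both held in shared mode; the reason reused here is that, as in the proof of \autoref{prop:1}, making \CURCOMB reference $\COMB{k}$ requires first holding $\COMB{k}$'s exclusive lock, which is impossible while $q$ holds it, so the value $\COMB{i}$ seen by the successful CAS cannot be $\COMB{k}$.

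For $l>j$ I would exploit the ticket test at line~\ref{alg:cxmut:compare-ticket}. On the iteration in which the CAS succeeds, $q$ must first have obtained \texttt{true} from \texttt{sharedTryLockCheckTkt} applied to $\COMB{i}$; by definition this both grants $q$ a shared lock on $\COMB{i}$ and certifies that $\COMB{i}.\texttt{head}$'s ticket --- which equals $j$, by definition of the state <$O_{i,j},\HEAD{i,j}$> --- is strictly less than $l$. The step I would then nail down is that this value cannot drift between the check and the CAS: $q$ keeps its shared lock on $\COMB{i}$ from line~\ref{alg:cxmut:compare-ticket} until the \texttt{sharedUnlock()} it performs \emph{after} the CAS, whereas a \texttt{head} field is written only by a process holding that instance's exclusive lock (e.g., lines~\ref{alg:cxmut:update-head-1} and~\ref{alg:cxmut:update-head-2}), which the \textsf{StrongTryRWRI} lock's mutual exclusion between shared and exclusive modes forbids while $q$'s shared lock is held. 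Hence at the instant the CAS commits, $\COMB{i}$ is still in state <$O_{i,j},\HEAD{i,j}$> with $j<l$ while $\COMB{k}$ is in state <$O_{k,l},\HEAD{k,l}$>, so \CURCOMB transitions from $\COMB{i}$ to $\COMB{k}$ with $l>j$, as required.

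I expect the stability step in the previous paragraph to be the main obstacle: the ticket comparison only constrains the situation at the moment of the \texttt{sharedTryLockCheckTkt} call, so I must exclude that, during the window up to the winning CAS, a concurrent updater advances $\COMB{i}.\texttt{head}$ (which would break $j<l$) or, symmetrically, makes \CURCOMB reference $\COMB{k}$ (which would break $i\neq k$). Both are closed off by the fact that $q$ never relinquishes the relevant lock --- shared on $\COMB{i}$, exclusive/handover on $\COMB{k}$ --- over the whole window, together with the linearizable mutual-exclusion guarantee of the reader-writer lock; making this precise is the one place where the argument genuinely relies on the lock rather than on properties of the wait-free queue.
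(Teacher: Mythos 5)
Your proof is correct and takes essentially the same route as the paper's: \autoref{prop:1} for $i \neq k$, \autoref{prop:3} for the state of the installed instance, and the ticket check at line~\ref{alg:cxmut:compare-ticket} for $l > j$. The stability argument you add --- that $q$'s shared lock on \COMB{i} and its exclusive/handover lock on \COMB{k} pin both states between the check and the winning CAS --- is a detail the paper's one-line proof leaves implicit, but it does not change the structure of the argument.
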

\begin{proof}
\autoref{prop:4} follows from \autoref{prop:1} and~\autoref{prop:3}.
In addition, the state transition can only occur (line~\ref{alg:cxmut:cas}) if $\HEAD{i,j}$ does not satisfy the condition at line~\ref{alg:cxmut:compare-ticket}.
This condition guarantees that $l>j$.
\end{proof}

\begin{lem}
\label{lem:1}
An update operation with sequence number $l$ can only return, after ensuring $curComb$ transitions to a $Comb_i$ with state $<O_{i,j},\HEAD{i,j}>$ where $l \leq j$.
\end{lem}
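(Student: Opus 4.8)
The plan is to do a case analysis over the four points at which \texttt{applyUpdate()} can return for an operation of sequence number $l$ --- lines~\ref{alg:cxmut:return-result-1}, \ref{alg:cxmut:return-result-2}, \ref{alg:cxmut:return-result-3}, and~\ref{alg:cxmut:return-result-4} of \autoref{alg:cxmut} --- and for each one to exhibit a moment, during the call and no later than the \texttt{return}, at which \CURCOMB references some \COMB{i} in state <$O_{i,j},\HEAD{i,j}$> with $l\le j$; by \autoref{prop:4} the head sequence number of \CURCOMB only grows, so a single such moment is enough. I would also carry a standing invariant: at any instant at most \texttt{maxThreads} queued operations have sequence number exceeding that of \CURCOMB's head (each live process owns at most one such operation), which is exactly \autoref{corollary:1}. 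Since that bound itself relies on already-returned operations being visible, I would prove it and this lemma \emph{jointly}, by induction on steps of the execution, so that in every case below the induction hypothesis supplies both the lemma for earlier-returning operations and the bound at the current step.

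The two straightforward cases come first. At line~\ref{alg:cxmut:return-result-3} the process has left the loop of line~\ref{alg:cxmut:apply:last-node} with its traversal pointer equal to \texttt{\_myNode} and has executed line~\ref{alg:cxmut:update-head-2}, so by \autoref{prop:3} its instance is in state <$O_{i,l},\HEAD{i,l}$>; it holds that instance in handover mode throughout the subsequent loop, so nobody can advance its head, and the successful CAS at line~\ref{alg:cxmut:cas} installs it as \CURCOMB with $j=l$. At line~\ref{alg:cxmut:return-result-1} the process holds some \COMB{i} exclusively and reads $\HEAD{i}$ of sequence number $j'\ge l$; here I would observe that a non-null head of sequence number $j'$ is written into a \texttt{Combined} instance only by copying it from \CURCOMB at line~\ref{alg:cxmut:copy-obj} (whose source is \CURCOMB, held in shared mode by line~\ref{alg:cxmut:acquire-shared}) or by the process that appended operation $j'$ at line~\ref{alg:cxmut:update-head-2}. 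In the first case \CURCOMB's head had sequence $\ge l$ at that earlier time; in the second, that process either itself installed \COMB{i} as \CURCOMB --- whose head was then at sequence $j'$ --- or it left \COMB{i} in handover for a later process to install and, in its fallback path, the induction hypothesis applies at its return. Either way \CURCOMB's head has had sequence $\ge j' \ge l$, and \autoref{prop:4} carries this to the present.

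The remaining two cases carry the real weight, and I expect them to be the main obstacle. Line~\ref{alg:cxmut:return-result-2}: if \texttt{getCombined}$(l)$ has just returned $-1$ --- which, by the ticket check it performs while holding \CURCOMB in shared mode, means precisely that \CURCOMB's head already has sequence $\ge l$ --- we are done; otherwise \_comb is non-null, so the process had taken a copy from \CURCOMB when its head had some sequence $m$ (with $l-m\le\texttt{maxThreads}$ by the standing bound) and that copy was then invalidated by node reclamation, and I would need to pin down the safety predicate of \CX's reclamation scheme and show that reclaiming any queue node in the range $(m,l]$ forces \CURCOMB's head strictly past that node, then use the standing bound to conclude it has actually reached sequence $\ge l$ --- not merely some node still behind $l$. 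Line~\ref{alg:cxmut:return-result-4} is reached only after the loop at lines~\ref{alg:cxmut:lock-shared}--\ref{alg:cxmut:cas} has run \texttt{maxThreads} iterations with the CAS at line~\ref{alg:cxmut:cas} never succeeding; in each iteration \texttt{sharedTryLockCheckTkt} either reports (via its ticket check, or because \CURCOMB has already moved off the instance we sampled) that \CURCOMB's head has sequence $\ge l$, or the CAS is attempted and fails because \CURCOMB has advanced (\autoref{prop:4}); since by the standing bound \CURCOMB starts the loop at most \texttt{maxThreads} behind $l$ while $l$ itself is queued, \texttt{maxThreads} strict advances would push it to sequence $\ge l$, so the first alternative must occur before the loop runs out --- and then \autoref{prop:4} again. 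The genuinely delicate points, which I expect to consume most of the effort, are formalizing the reclamation-safety argument for line~\ref{alg:cxmut:return-result-2} and discharging the mutual dependence between this lemma and \autoref{corollary:1} via the joint induction.
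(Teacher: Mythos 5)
Your proposal follows essentially the same route as the paper's proof: a case analysis over the four return points, using \autoref{prop:3} and \autoref{prop:4} for the monotonic advance of \CURCOMB, and the \texttt{maxThreads} bound to handle the \texttt{getCombined()} failure and the final CAS loop. The two places where you go beyond the paper are precisely where its own argument is thinnest: it silently invokes the bound of \autoref{corollary:1} inside this proof (``assuming no operation can return before ensuring its operation is visible at \CURCOMB\ldots'') even though that corollary is later derived from this lemma---your joint induction is the right way to discharge that circularity---and its treatment of line~\ref{alg:cxmut:return-result-2} covers only the \texttt{getCombined()} $=-1$ case, omitting the invalidated-copy (\texttt{\_comb} $\neq$ \textbf{null}) subcase whose reclamation-safety argument you correctly flag as still needing to be supplied.
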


\begin{proof}
An update operation with sequence number $l$ will complete as soon as \texttt{applyUpdate()} returns (lines~\ref{alg:cxmut:return-result-1}, \ref{alg:cxmut:return-result-2}, \ref{alg:cxmut:return-result-3} or \ref{alg:cxmut:return-result-4}).
After the exclusive lock is acquired for the \texttt{Combined} instance \COMB{k} with state <$O_{k,m},\HEAD{k,m}$> at line~\ref{alg:cxmut:compare-ticket-head}, it will validate if the sequence number of $\HEAD{k,m}$ is greater than or equal to $l$, implying that $l \leq m$.
If \COMB{k} can have been acquired in exclusive mode, then any previous update operation with sequence number $m$ that updated \COMB{k} with update operations until $\HEAD{k,m}$ had to guarantee that \CURCOMB was referencing a \texttt{Combined} instance \COMB{i} with state <$O_{i,j},\HEAD{i,j}$> where $m \leq j$.
This proves that an update operation with sequence $l$ can return at line~\ref{alg:cxmut:return-result-1} with the guarantee that \CURCOMB was at least referencing a \texttt{Combined} instance \COMB{i} with state <$O_{i,j},\HEAD{i,j}$> where $l \leq j$.

Execution from lines~\ref{alg:cxmut:copy-required} to~\ref{alg:cxmut:copy-completed} occurs when the \texttt{Combined} instance acquired in exclusive mode requires a copy of \CURCOMB.
The method \texttt{getCombined()} can only return null in two cases:
in case \CURCOMB sequence $j$ is higher than or equal to $l$; otherwise, if after \texttt{maxThreads} trials it fails to acquire the shared lock of the current \CURCOMB, represented as a specific \COMB{i} with state <$O_{i,j},\HEAD{i,j}$>.
This can only occur if \CURCOMB changed at least \texttt{maxThreads} times.
By \autoref{prop:4}, \CURCOMB must be referencing a \texttt{Combined} instance with state <$O_{k,m},\HEAD{k,m}$> where $j+\texttt{maxThreads} \leq m$.
Assuming no operation can return before ensuring its operation is visible at \CURCOMB then $l \leq j+\texttt{maxThreads}$, implying that $l \leq m$. 
This proves that after \texttt{maxThreads} transitions of \CURCOMB it must contain the update operation with sequence $l$.

On line~\ref{alg:cxmut:return-result-3} the update operation returns after ensuring that \CURCOMB references a \texttt{Combined} instance with sequence number lower than $l$ (line~\ref{alg:cxmut:compare-ticket}) and successfully transitions \CURCOMB, which by \autoref{prop:4} will map to a \texttt{Combined} instance with state <$O_{k,l},\HEAD{k,l}$> where $i \neq k$ and $l>j$.

When the update operation returns after \texttt{maxThreads} failed attempts to acquire the shared lock of the current \CURCOMB on line~\ref{alg:cxmut:return-result-4}, \CURCOMB must contain the update operation with sequence $l$.
\end{proof}

From \autoref{lem:1}, we can directly infer the following corollary.

\begin{cor}
\label{corollary:1}
\CURCOMB always references a \texttt{Combined} instance with state $<O_{i,j},\HEAD{i,j}>$ where $l-j \leq \texttt{maxThreads}$, with $l$ the sequence number at the tail of the wait-free queue.
\end{cor}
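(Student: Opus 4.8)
The plan is to read \autoref{lem:1} contrapositively: since an update operation cannot return before \CURCOMB has advanced to (at least) its own sequence number, every operation whose sequence number exceeds the one currently carried by \CURCOMB must still be in flight, and there are at most \texttt{maxThreads} processes to carry those operations; hence the operations sitting "between" \CURCOMB and the tail can number at most \texttt{maxThreads}. Concretely, I would fix an arbitrary instant of the execution, let $l$ be the sequence number of the node currently at the tail of the wait-free queue, and suppose \CURCOMB references \COMB{i} in state $<O_{i,j},\HEAD{i,j}>$; if $j\ge l$ there is nothing to prove, so assume $j<l$, and the goal becomes to show that the $l-j$ nodes with sequence numbers $j+1,\dots,l$ are attributable to $l-j$ distinct processes.

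The first step is to record that the sequence number carried by \CURCOMB never decreases over the run: \CURCOMB is modified only at the CAS on line~\ref{alg:cxmut:cas}, and by \autoref{prop:4} every such modification moves it to a \texttt{Combined} instance with a strictly larger sequence number, so it currently equals $j$ and was $\le j$ at every earlier time. The second, and key, step is then immediate for update operations: if some $\OP{k}$ with $j<k\le l$ had already returned, \autoref{lem:1} would place \CURCOMB at sequence $\ge k$ at the instant $\OP{k}$ returned, hence (by the monotonicity just noted) at sequence $\ge k>j$ now, contradicting the assumed state $<O_{i,j},\HEAD{i,j}>$; so $\OP{j+1},\dots,\OP{l}$ are all still pending. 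Finally, a process invokes \texttt{applyUpdate()} synchronously — it appends exactly one node and returns only once that operation is resolved, appending no further node in between — so these $l-j$ pending operations run on $l-j$ pairwise distinct processes, of which there are at most \texttt{maxThreads}; therefore $l-j\le\texttt{maxThreads}$. The base case is trivial: when the tail is still the sentinel, $l=0$ and \CURCOMB is at sequence $0$.

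The step I expect to be the main obstacle is ruling out that something with sequence number greater than $j$ has already completed. It is clean for update operations through \autoref{lem:1}, but a reader that publishes its node after \texttt{MAX\_READ\_TRIES} unsuccessful attempts can return via the stored \texttt{result} and is not directly governed by \autoref{lem:1}, so a completed read operation could in principle leave a node ahead of \CURCOMB. To close this, one additionally needs the property established in the wait-freedom discussion that \CURCOMB overtakes any published read node within at most \texttt{maxThreads} of its own transitions, which keeps the nodes ahead of \CURCOMB — update and read alike — attributable to at most \texttt{maxThreads} distinct processes (an updater process and a reader process at a given instant being executing distinct methods). A lesser point is that \autoref{lem:1} must be invoked exactly at the instant $\OP{k}$ returns, and that the fixed instant be chosen while the nodes $j+1,\dots,l$ still exist in the queue — which is automatic, since reclamation only disposes of nodes up to $\CURCOMB.\textit{head}$.
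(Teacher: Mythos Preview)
Your argument is correct and is exactly the one the paper intends: the paper's entire proof of \autoref{corollary:1} is the single sentence ``From \autoref{lem:1}, we can directly infer the following corollary,'' and your contrapositive reading of \autoref{lem:1} together with the monotonicity from \autoref{prop:4} is the unpacking of that sentence. Your flagged subtlety about reader-enqueued nodes is a genuine detail the paper's one-line derivation glosses over; your proposed resolution via the wait-freedom discussion is the right place to close it.
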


We now introduce the remaining lemmas that will allow us to prove linearizability of the \CX universal construct.

\begin{lem}
\label{lem:2}
Given \OP{1} and \OP{2} two update operations on object $O$, if $\OP{1} <_\textsf{hb} \OP{2}$ and \OP{2} belongs to $S$ then \OP{1} belongs to $S$, where $S$ is a subhistory of $H$ on $O$.
\end{lem}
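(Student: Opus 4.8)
The plan is to read $S$ as the sequential history that \CX associates with $H$: the total order on update operations induced by the positions of their nodes in the wait-free queue (this is exactly the \emph{partial subhistory} notion introduced just above, restricted to mutations). Under this reading the lemma has two parts --- that the queue order refines $<_\textsf{hb}$ on update operations, and that the set of update operations appearing in $S$ is closed under taking queue-predecessors --- and I would establish them in that order.

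First I would fix the sequence numbers $l_1$ and $l_2$ of \OP{1} and \OP{2}; these are the \texttt{ticket} values assigned by \texttt{enqueue()} on line~\ref{alg:cxmut:enqueue-node}, which is the opening statement of \texttt{applyUpdate()}. Consequently the \texttt{enqueue()} invoked by \OP{1} lies inside \OP{1}'s interval and its response $<_\textsf{hb}$-precedes \OPRES{1}, while \OPINV{2} $<_\textsf{hb}$-precedes the \texttt{enqueue()} invoked by \OP{2}. Chaining these with the hypothesis $\OPRES{1} <_\textsf{hb} \OPINV{2}$ shows that \OP{1}'s enqueue returns before \OP{2}'s enqueue is invoked. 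Since \CX is built on a \emph{linearizable} wait-free queue whose consecutive nodes carry monotonically increasing \texttt{ticket} values, the linearization point of \OP{1}'s enqueue precedes that of \OP{2}'s, so $l_1 < l_2$ and \OP{1}'s node precedes \OP{2}'s node in the queue.

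Next I would show that $\OP{2} \in S$ forces $\OP{1} \in S$. Membership of \OP{2} in $S$ means that \OP{2}'s node (sequence $l_2$) is among the nodes the construction feeds into $S$ --- appealing, if \OP{2} is still pending, to the extension $H'$ that supplies its response. Because $l_1 < l_2$, \OP{1}'s node was appended no later than \OP{2}'s, and it is never removed from the logical history: node reclamation only reclaims the \emph{memory} of a node after its mutation has already been applied and subsumed by \CURCOMB's \texttt{head}, which does not erase the operation from $S$. Hence \OP{1} belongs to $S$, and since $S$ orders update operations precisely by sequence number, $l_1 < l_2$ also gives $\OP{1} <_\textsf{hb} \OP{2}$ in $S$.

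I expect the second step to be the delicate one. The happens-before transfer of the first step is essentially bookkeeping once one notices that the enqueue is the first move of \texttt{applyUpdate()}; but making the phrase ``\OP{2} belongs to $S$'' precise, and arguing that it entails downward closure despite node reclamation and pending operations, is where the linearizability of the underlying queue and the reclamation discipline have to be invoked jointly. I would be careful to state up front exactly which sequential history $S$ denotes, so that the closure claim is not circular with the paper's own definition of ``subhistory''.
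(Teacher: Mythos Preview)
Your argument is correct, but it differs in presentation from the paper's. The paper's entire proof is the single sentence ``Follows from Proposition~3.'' That proposition states that an update with sequence number $l$ atomically transitions a \texttt{Combined} instance from $\langle O_{i,j},\HEAD{i,j}\rangle$ to $\langle O_{i,l},\HEAD{i,l}\rangle$, and its proof already records that the queue traversal is linearizable, that every process sees the same sequence of operations, and that the queue ``establishes the partial history of the concurrent execution.'' The paper simply points back to that package.

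What you do instead is unpack the relevant content directly: you argue from the placement of \texttt{enqueue()} at the start of \texttt{applyUpdate()} and the linearizability and monotone-ticket property of the queue to get $l_1<l_2$, and then argue downward closure of $S$ under queue-predecessors. This is the same mathematical content as the part of Proposition~3 that the paper is implicitly citing, but reached without the detour through \texttt{Combined} state transitions. Your route is more self-contained and makes explicit the step (enqueue precedes the rest of the operation, hence $<_\textsf{hb}$ on the outer operations transfers to the enqueues) that the paper leaves to the reader; the paper's route has the virtue of reusing a proposition it needed anyway. Your caveats about fixing the meaning of $S$ and about reclamation not erasing operations from the logical history are well placed and go beyond what the paper spells out.
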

\vspace{-6pt}
\begin{proof}
Follows from proposition 3.
\end{proof}
\begin{lem}
\label{lem:3}
Given \OP{u} an update operation and \OP{r} a read-only operation, both on object $O$, if $\OP{u} <_\textsf{hb} \OP{r}$ then \OP{r} will have to see the effects of \OP{u} on $O$.
\end{lem}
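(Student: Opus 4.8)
The plan is to track the sequence number of the \texttt{Node} that $\OP{u}$ places in the wait-free queue and to show that, whichever exit $\OP{r}$ takes in \texttt{applyRead()}, the value it returns is computed from a simulated object that has already absorbed that node. First I would fix $l$ to be the sequence number of the node enqueued by $\OP{u}$ on line~\ref{alg:cxmut:enqueue-node}. This enqueue completes before $\OPRES{u}$, and by \autoref{lem:1} $\OP{u}$ cannot return until \CURCOMB has referenced some \COMB{i} with state $<O_{i,j},\HEAD{i,j}>$ where $l \leq j$; combined with \autoref{prop:4}, which says every transition of \CURCOMB strictly increases its head's sequence number, this yields the key invariant: from some point no later than $\OPRES{u}$ onward, \CURCOMB always references a \texttt{Combined} instance whose head has sequence number at least $l$. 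Since $\OP{u} <_\textsf{hb} \OP{r}$ forces $\OPRES{u} <_\textsf{hb} \OPINV{r}$, this invariant holds for the entire duration of $\OP{r}$.

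Next I would case on the return path of $\OP{r}$. \textbf{Fast path:} here $\OP{r}$ loads \CURCOMB on line~\ref{alg:cxread:read:curcomb-load} into \texttt{\_comb}, takes \texttt{\_comb}'s shared lock, re-verifies \texttt{\_comb == curComb}, and only then runs \texttt{readFunc()}. The re-check pins an instance \COMB{i} that was \CURCOMB at that moment, so by the invariant its state is $<O_{i,j},\HEAD{i,j}>$ with $j \geq l$; by \autoref{prop:1} \CURCOMB moves only between shared-locked instances, and by \autoref{prop:3} a \texttt{Combined} instance is published with a finalized head/obj pair and is mutated only under an exclusive lock, so holding the shared lock keeps \texttt{\_comb.obj} equal to $O_{i,j}$ throughout the call. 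Hence \texttt{readFunc()} observes $O_{i,j}$, which by the definition of $O_{i,j}$ contains the effect of the operation with sequence $l$, namely $\OP{u}$. \textbf{Slow path:} here $\OP{r}$ enqueues its own node on line~\ref{alg:cxread:enqueue-node}; this enqueue starts after $\OPINV{r}$, hence after $\OPRES{u}$ and after $\OP{u}$'s enqueue, so by the monotonicity of sequence numbers $\OP{r}$'s node gets a sequence number $l' > l$. By \autoref{prop:3}, any updater that drives a \texttt{Combined} instance to state $<O_{\cdot,l'},\HEAD{\cdot,l'}>$ applies the queued operations in sequence order and therefore applies $\OP{u}$ before $\OP{r}$'s \texttt{readFunc}, storing into $\OP{r}$'s \texttt{node.result} on line~\ref{alg:cxmut:store-result} a value produced from a state that reflects $\OP{u}$; the wait-freedom argument (together with \autoref{lem:1} and \autoref{corollary:1}) guarantees this happens within \texttt{maxThreads} transitions of \CURCOMB, so the value $\OP{r}$ ultimately returns, \texttt{\_myNode.result}, already includes $\OP{u}$'s effect.

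The step I expect to be the main obstacle is the fast path: turning the informal claim that ``the re-check plus the shared lock pin a good instance'' into a rigorous argument. It requires simultaneously using the monotonicity of \CURCOMB (\autoref{prop:4}) to exclude an ABA-style situation in which \CURCOMB departs from an instance and later returns to it with an older head, and the publication discipline of \autoref{prop:3} to guarantee that \texttt{\_comb} was not caught mid-update when its shared lock was taken (the exclusive lock would have blocked the shared acquisition) and cannot be updated while that lock is held. The remaining corner, where $\OP{r}$ reaches the fast path \emph{after} it has already enqueued a node, needs no extra work, since the head-sequence invariant established in the first paragraph holds for all of $\OP{r}$ regardless of which iteration succeeds.
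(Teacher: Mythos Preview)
Your argument is correct, and it is considerably more thorough than the paper's. The paper dispatches the lemma in three lines by contradiction: it assumes $\OP{r}$ accesses a \texttt{Combined} instance missing $\OP{u}$'s effect, infers (implicitly via \autoref{prop:4}) that \CURCOMB has not yet reached an instance containing $\OP{u}$, and then invokes \autoref{lem:1} to conclude $\OP{u}$ has not returned, contradicting $\OP{u} <_\textsf{hb} \OP{r}$. Your route is direct rather than by contradiction: you first establish the monotone invariant that \CURCOMB's head sequence is at least $l$ from $\OPRES{u}$ onward (the same combination of \autoref{lem:1} and \autoref{prop:4} the paper uses), and then you explicitly split on the two exit paths of \texttt{applyRead()}. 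The paper's proof never distinguishes the fast and slow paths; it tacitly treats the reader as always ``accessing a \texttt{Combined} instance,'' which glosses over the slow path where the result is fetched from \texttt{\_myNode.result}. Your handling of that branch (showing $l' > l$ from the linearizable enqueue and then invoking \autoref{prop:3}) and your explicit treatment of the re-check/shared-lock pinning on the fast path make the argument water-tight at points the paper leaves to the reader. The trade-off is length: the paper's contrapositive is compact, whereas your decomposition is longer but exposes exactly which structural facts (\autoref{prop:1}, \autoref{prop:3}, \autoref{prop:4}) are doing the work at each step.
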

\begin{proof}
If \OP{r} execution accesses a \texttt{Combined} instance that does not contain the effects of \OP{u}, then \CURCOMB has not yet transitioned to an instance that contains \OP{u}. By \autoref{lem:1}, the \OP{u} is only considered to take effect after \CURCOMB transitions to a Combined instance which $O_i$ contains the effects of \OP{u}.
This means that \OP{r} could take place before \OP{u}, implying \OP{r} $<_\textsf{hb}$ \OP{u}, thus contradicting the initial assumption.
\end{proof}
\begin{lem}
\label{lem:4}
Given \OP{r} a read-only operation and \OP{u} an update operation, both on object $O$, if $\OP{r} <_\textsf{hb} \OP{u}$ then \OP{r} will not see the effects of \OP{u} on $O$.
\end{lem}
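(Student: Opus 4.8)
The plan is to argue dually to \autoref{lem:3}: a read-only operation observes a \emph{frozen} object state whose largest applied sequence number is pinned strictly \emph{below} the sequence number that \OP{u} will eventually receive, so \OP{u}'s effect cannot be present in it, and the value \OP{r} returns is therefore the same as if \OP{u} did not precede it.

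First I would unpack the hypothesis. $\OP{r} <_\textsf{hb} \OP{u}$ means $\OPRES{r}$ precedes $\OPINV{u}$ in $H$. The node carrying \OP{u}'s mutation is appended to the wait-free queue only \emph{after} $\OPINV{u}$ (inside \texttt{applyUpdate()}, line~\ref{alg:cxmut:enqueue-node}), so that \texttt{enqueue} is linearized after $\OPRES{r}$. Writing $t_u$ for \OP{u}'s sequence number and using that the queue assigns monotonically increasing sequence numbers in \texttt{enqueue} order, every node already linearized into the queue by the time $\OPRES{r}$ occurs — and hence every node present while \OP{r} is still running — carries a sequence number strictly smaller than $t_u$.

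Then I would split on the path \OP{r} takes. On the fast path, \OP{r} evaluates \texttt{readFunc} against \texttt{\_comb.obj}, where \texttt{\_comb}$=$\CURCOMB was shared-locked and re-checked equal to \texttt{curComb} (lines~\ref{alg:cxread:read:curcomb-load} and~\ref{alg:cxread:read:shared-lock}). By \autoref{prop:1} this instance stays in shared/handover mode while the reader holds its lock, so its state is a stable pair <$O_{i,h},\HEAD{i,h}$>; by \autoref{prop:3}, $O_{i,h}$ reflects precisely the operations of sequence number $\le h$. Since \texttt{head} was set to $\HEAD{i,h}$ (line~\ref{alg:cxmut:update-head-2}) before \CURCOMB was CAS'd to \texttt{\_comb} (line~\ref{alg:cxmut:cas}), hence before \OP{r}'s load of \CURCOMB and therefore before $\OPRES{r}$, the monotonicity above gives $h<t_u$, so \OP{u}'s effect is absent from $O_{i,h}$. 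On the slow path, \OP{r} first enqueues its own node with some sequence $t_r$ (line~\ref{alg:cxread:enqueue-node}) before $\OPRES{r}$, whence $t_r<t_u$; its \texttt{readFunc} is then evaluated by an updater (line~\ref{alg:cxmut:store-result}) at the loop iteration of line~\ref{alg:cxmut:apply:last-node} in which \texttt{\_mn} is \OP{r}'s node, and by \autoref{prop:3} \texttt{\_comb.obj} at that point has had the queue's operations applied in order up to sequence $t_r-1$, that is, it is some $O_{i,t_r-1}$; since $t_r-1<t_u$, \OP{u}'s effect is again absent, and this remains true if several updaters race to evaluate \OP{r}'s \texttt{readFunc}, each on a copy in state $O_{i,t_r-1}$. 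In both cases the value returned by \OP{r} is \texttt{readFunc} applied to a state that does not reflect \OP{u}, which is the claim.

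The step I expect to be the main obstacle is the slow-path bookkeeping: justifying that the state on which a pending read node's \texttt{readFunc} is evaluated is no fresher than $O_{i,t_r-1}$. This combines the in-order traversal of the loop at line~\ref{alg:cxmut:apply:last-node} with \autoref{prop:3}, together with the fact that an updater only processes nodes \emph{strictly after} its current \texttt{head}, so that a copy freshly taken from \CURCOMB at line~\ref{alg:cxmut:copy-obj} cannot already be past sequence $t_r$ when \OP{r}'s node is reached; one may also invoke the visibility guarantee underlying \autoref{lem:1} to ensure the read node is indeed reached. The fast-path case is comparatively routine once \autoref{prop:1} and \autoref{prop:3} are applied.
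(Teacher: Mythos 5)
Your proof is correct, but it takes a genuinely different route from the paper's. The paper argues by contradiction at a higher level of abstraction: by \autoref{lem:1}, \OP{u} returns only after \CURCOMB references an instance containing its effect, reads access only \CURCOMB, so if \OP{r} observed \OP{u}'s effect then \OP{u} would already be visible before \OPINV{u}'s matching response could precede \OPINV{r} -- contradicting $\OP{r} <_\textsf{hb} \OP{u}$. You instead give a direct, forward argument from the queue's linearization order: \OP{u}'s node is not even enqueued until after \OPRES{r}, so every node reachable while \OP{r} runs has sequence number strictly below $t_u$, and by \autoref{prop:3} every object state \OP{r} can possibly observe is some $O_{i,j}$ with $j<t_u$. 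This buys two things the paper's proof does not deliver. First, it does not need \autoref{lem:1} at all, only the FIFO queue and \autoref{prop:3}, so it avoids the somewhat informal step ``it would be possible to consider as if \OP{u} had occurred before \OP{r}.'' Second, you explicitly handle the slow path of \texttt{applyRead()}, where the reader publishes a node and its \texttt{readFunc} is evaluated by an updater at line~\ref{alg:cxmut:store-result} against a state $O_{i,t_r-1}$ with $t_r<t_u$; the paper's assertion that ``any read operation accesses only the \texttt{Combined} instance referenced by \CURCOMB'' is not literally true on that path, so your case split closes a real imprecision. One small nit: the stability of the fast-path reader's snapshot follows from the rwlock semantics (no exclusive acquisition while the shared lock is held), not from \autoref{prop:1} per se, though nothing in your argument actually depends on that citation -- the sequence-number bound alone already excludes \OP{u}'s effect from any observable state.
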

\begin{proof}
\OP{u} can only return after guaranteeing that \CURCOMB has transitioned to a \texttt{Combined} instance that contains the effects of \OP{u}, by \autoref{lem:1}.
Any read operation accesses only the Combined instance referenced by \CURCOMB.
If \OP{r} accesses a Combined instance that contains the effects of \OP{u} then it would be possible to consider as if \OP{u} had occurred before \OP{r}, because the current \CURCOMB already contains \OP{u}. This contradicts the definition of happens-before, if the response of \OP{u} can occur before the invocation of \OP{r} then $\OP{u} <_\textsf{hb} \OP{r}$ which contradicts the initial assumption. Implying \CURCOMB can not contain the effects of \OP{u} and, therefore, \OP{r} will not see the effects of \OP{u} over object $O$.
\end{proof}
\begin{lem}
\label{lem:5}
Given \OP{1} and \OP{2} two identical read-only operations on object $O$, if $\OP{1} <_\textsf{hb}\OP{2}$ then \OP{2} returns the same result as \OP{1}, unless an update operation \OP{u} interleaves.
\end{lem}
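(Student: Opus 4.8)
The plan is to fix, for each of the two identical read-only operations $\OP{1}$ and $\OP{2}$, the \texttt{Combined} instance from which it actually reads. On the fast path of \texttt{applyRead()} an operation returns a value only after the re-check \texttt{\_comb == curComb} succeeds, so I would linearize $\OP{1}$ (resp.\ $\OP{2}$) at that re-check; at that instant \CURCOMB references some \COMB{a} in state $<O_{a,p},\HEAD{a,p}>$ (resp.\ \COMB{b} in state $<O_{b,q},\HEAD{b,q}>$). By \autoref{prop:1} the instance referenced by \CURCOMB is held in shared mode, so the reader (which itself holds a shared lock on it) is guaranteed that \texttt{\_comb.obj} is not mutated for the duration of the read; hence the value returned by $\OP{1}$ is a deterministic function (the sequential code \texttt{readFunc}) of $O_{a,p}$, and the value returned by $\OP{2}$ is the same deterministic function of $O_{b,q}$. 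It therefore suffices to show that, absent an interleaving update, $O_{a,p}$ and $O_{b,q}$ are the same object state.

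First I would record a monotonicity fact: by \autoref{prop:4} every transition of \CURCOMB strictly increases the sequence number of its state, so the sequence number carried by \CURCOMB is non-decreasing in time. Since $\OP{1} <_\textsf{hb} \OP{2}$, the linearization point of $\OP{1}$ precedes that of $\OP{2}$, and therefore $p \le q$.

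The heart of the argument is to show $p = q$ under the hypothesis that no update operation interleaves, i.e.\ that every update $\OP{u}$ on $O$ satisfies either $\OP{u} <_\textsf{hb} \OP{1}$ or $\OP{2} <_\textsf{hb} \OP{u}$. Suppose toward a contradiction that $p < q$. Then \CURCOMB underwent at least one transition between the two linearization points, and by inspection of line~\ref{alg:cxmut:cas} any such transition happens strictly inside the execution of some update operation $\OP{u}$ with sequence number $r$ satisfying $p < r \le q$. I claim $\OP{u}$ interleaves. It cannot be that $\OP{2} <_\textsf{hb} \OP{u}$: the successful CAS precedes $\OP{2}$'s linearization point, hence $\OP{u}$'s invocation precedes $\OP{2}$'s response. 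It cannot be that $\OP{u} <_\textsf{hb} \OP{1}$ either: by \autoref{lem:1}, $\OP{u}$ returns only after \CURCOMB references a \texttt{Combined} instance with sequence number $\ge r > p$, so if $\OP{u}$'s response preceded $\OP{1}$'s invocation then, by the monotonicity above, \CURCOMB would already carry sequence $\ge r > p$ at $\OP{1}$'s linearization point, contradicting that $\OP{1}$ read state $p$. Hence $\OP{u}$ is neither before $\OP{1}$ nor after $\OP{2}$, i.e.\ it interleaves, contradicting the hypothesis; therefore $p = q$.

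To finish: since $p = q$ and transitions of \CURCOMB strictly increase the sequence number (\autoref{prop:4}), \CURCOMB made no transition at all between the two linearization points, so \COMB{a} = \COMB{b} and, by \autoref{prop:1}, it held the identical state $O_{a,p}$ at both instants; running the same \texttt{readFunc} on it twice yields the same result. The only residual case is a read on the slow path, which enqueues its operation after \texttt{MAX\_READ\_TRIES} failed \texttt{sharedTryLock} attempts on \CURCOMB; such attempts fail only because some updater holds the exclusive lock on, or has advanced, \CURCOMB during those attempts, so an update operation concurrent with the read exists and interleaves, making the statement vacuous on that branch (alternatively, place the enqueued read at its fixed position in the linearization order established by the queue and invoke \autoref{prop:3} and \autoref{lem:2}). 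I expect the main obstacle to be exactly this bridge between the abstract notion ``an update interleaves'' and the concrete transitions of \CURCOMB — in particular, using \autoref{lem:1} to exclude $\OP{u} <_\textsf{hb} \OP{1}$, and relying on linearizability of the wait-free queue and of the strong reader-writer lock so that the CAS on line~\ref{alg:cxmut:cas} is genuinely attributable to one well-defined update operation rather than being an artifact of reordering.
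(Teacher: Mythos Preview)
Your proof is correct and follows the same logical line as the paper's: absent an interleaving update, \CURCOMB cannot transition between the two reads, so both observe the same \texttt{Combined} state and the identical \texttt{readFunc} yields the same result. The paper's own proof is a three-sentence sketch (``only update operations can transition \CURCOMB; reads access \CURCOMB; hence no interleaving update implies same instance''), whereas you spell out the contrapositive carefully via \autoref{prop:4} and \autoref{lem:1} and explicitly dispose of the slow-path branch, but the underlying argument is the same.
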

\begin{proof}
By \autoref{lem:1}, only update operations can transition the state of \CURCOMB.
All read operations access only the Combined instance referenced by \CURCOMB.
If there is no update operation \OP{3} interleaving between \OP{1} and \OP{2}, the read operations will necessary access the same Combined instance yielding the same result.
\end{proof}

The proof of \autoref{lem:3}, \ref{lem:4} and~\ref{lem:5} rely on the fact that any update operation must guarantee that \CURCOMB contains the effects of its operation, by \autoref{lem:1}, and that the read operation always executes on the object referenced by \CURCOMB.

\begin{thm}
\label{theorem:1}
The \CX universal construct provides linearizable operations.
\end{thm}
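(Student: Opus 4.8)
The plan is to establish linearizability by exhibiting, for any history $H$ produced by \CX, a linearization point for every operation and then showing that the resulting sequential history $S$ is legal and respects the happens-before order $<_\textsf{hb}$. The natural choice of linearization points is already implicit in the preceding lemmas: an update operation \OP{u} with sequence number $l$ is linearized at the moment \CURCOMB first transitions to a \texttt{Combined} instance whose state $<O_{i,j},\HEAD{i,j}>$ satisfies $l \leq j$ (this instant exists and precedes \OP{u}'s response by \autoref{lem:1}, and it follows \OP{u}'s invocation because \OP{u} only enqueues its node — assigning sequence $l$ — after being invoked); a read-only operation \OP{r} is linearized at the instant it successfully reads \CURCOMB under a shared lock (line~\ref{alg:cxread:read:shared-lock}–\ref{alg:cxread:read:curcomb-load} confirmed), or, in the wait-free fallback path, at the transition of \CURCOMB that first makes its enqueued node visible.

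**First I would** argue that $S$, the sequence of operations ordered by these linearization points, is well-defined: update operations are totally ordered by their queue sequence numbers, and by \autoref{prop:4} \CURCOMB's head sequence number is strictly monotone, so the transition instants are distinct and consistent with the queue order. **Next I would** show $S$ is legal, i.e., each operation returns the value dictated by the sequential specification applied to the state after all earlier operations in $S$. For updates this is \autoref{prop:3}: the \texttt{Combined} instance on which \OP{u}'s node is applied has, at that point, the state $O_{i,l}$ obtained by sequentially applying \OP{1},\dots,\OP{l}, and the stored \texttt{result} (or the one returned directly at lines~\ref{alg:cxmut:return-result-1}–\ref{alg:cxmut:return-result-4}) equals $op_l()$ evaluated on $O_{i,l-1}$; since the queue order coincides with the update order in $S$, this is exactly the specified return value. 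For reads, \OP{r} executes \texttt{readFunc()} on the \texttt{obj} of the \texttt{Combined} referenced by \CURCOMB while holding a shared lock, so by \autoref{prop:3} no concurrent mutation is in progress and the object is in some consistent state $O_{i,j}$; this state reflects precisely the updates with sequence $\leq j$, which are exactly the updates preceding \OP{r} in $S$, so \OP{r}'s result is legal.

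**Then I would** verify that $<_\textsf{hb}$ is preserved in $S$, splitting into the four cases already handled as lemmas: update-before-update is \autoref{lem:2}; update-before-read is \autoref{lem:3}; read-before-update is \autoref{lem:4}; and read-before-read is \autoref{lem:5}. In each case the lemma shows that the linearization points respect the real-time order — concretely, that the \CURCOMB-transition (or shared-lock read) realizing the earlier operation occurs no later than that of the later operation. Finally, for the completeness clause of the definition, I would take the extension $H'$ that completes every pending operation whose linearization point has already occurred (updates whose node became visible at \CURCOMB; reads that already executed \texttt{readFunc()}), and discard the rest; $\mathit{complete}(H')$ is then equivalent to $S$ by construction.

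**The main obstacle** I expect is the wait-free fallback path for readers (lines~\ref{alg:cxread:enqueue-node}–\ref{alg:cxread:return}): such a read does \emph{not} execute \texttt{readFunc()} itself but recovers a result that some updater stored in \texttt{\_myNode.result}. I must pin down exactly which state that result corresponds to and argue it is consistent with a valid linearization point lying inside the read's execution interval — in particular that \CURCOMB did transition past the read's node before the read returns (the text claims this happens within \texttt{maxThreads} transitions of \CURCOMB), and that placing the read immediately after the last update preceding its node in the queue both yields the returned value and respects $<_\textsf{hb}$ against all concurrent operations. A secondary subtlety is the interaction of the handover mechanism with \autoref{prop:1}: I need that a reader which acquired the shared lock genuinely observes the \texttt{obj} matching the \CURCOMB value it loaded, which relies on the double-check \texttt{\_comb == curComb} at line~\ref{alg:cxread:read:shared-lock}+1 together with the fact (\autoref{prop:1}) that \CURCOMB only ever moves between shared-locked instances, so a stale-but-still-shared-locked instance is still a consistent snapshot of a legal past state — which suffices for reads but must be reconciled with real-time order via \autoref{lem:4} and \autoref{lem:5}.
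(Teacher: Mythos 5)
Your proposal is correct and follows essentially the same route as the paper: the paper's proof of this theorem is simply ``follows from Lemmas~\ref{lem:2}, \ref{lem:3}, \ref{lem:4} and~\ref{lem:5},'' with the linearization points (the enqueue for writer--writer order, the CAS on \CURCOMB for writer--reader visibility) described in the surrounding text --- exactly the skeleton you flesh out. Your treatment is in fact more complete than the paper's, since you additionally argue legality of $S$ via \autoref{prop:3} and explicitly confront the reader fallback path and the handover subtlety, which the paper leaves implicit.
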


\begin{proof}
Follows from \autoref{lem:2}, \ref{lem:3}, \ref{lem:4} and~\ref{lem:5}.
\end{proof}

Processes calling \texttt{applyUpdate()} are imposed a FIFO linearizable order by the wait-free queue, forcing each process to see the mutations to be applied in the same global order, and if needed, apply these mutations on its local data structure copy \texttt{Combined.obj}.
This means that the linearization point from writers to writers is the enqueuing in the wait-free queue on line~\ref{alg:cxmut:enqueue-node}.
For writers to readers, the mutation becomes visible when \emph{published} in \texttt{curComb}.
As such, the linearization point from writers to readers is the CAS in \texttt{curComb} on line~\ref{alg:cxmut:cas}, which makes the mutation visible to readers on the load of line~\ref{alg:cxread:read:curcomb-load}.

\end{document}